\newtheorem{theorem}{Theorem}
\newcommand{\flag}[2]{\genfrac{[}{]}{0pt}{}{#1}{#2}}
\begin{document}

\markboth{Carr\'on Duque J, Marinucci D}{Geometric Methods for Spherical Data, with Applications to Cosmology}

\title{Geometric Methods for Spherical Data, with Applications to Cosmology}

\author{Javier Carr\'on Duque$^{1,2}$ and Domenico Marinucci$^3$
\affil{$^1$Department of Physics, University of Rome Tor Vergata, Roma, Italy, 00133; email: javier.carron@roma2.infn.it}
\affil{$^2$Sezione INFN Roma~2, Roma, Italy, 00133}
\affil{$^3$Department of Mathematics, University of Rome Tor Vergata, Roma, Italy, 00133; email: marinucc@mat.uniroma2.it }}

\begin{abstract}

This survey is devoted to recent developments in the statistical analysis of
spherical data, with a view to applications in Cosmology. We will start from
a brief discussion of Cosmological questions and motivations, arguing that most Cosmological observables are spherical random fields. Then, we will introduce some mathematical background on spherical random fields, including
spectral representations and the construction of needlet and wavelet frames. We
will then focus on some specific issues, including tools and algorithms for map reconstruction (\textit{i.e.}, separating the different physical components which contribute to the observed field), geometric tools for testing the assumptions of Gaussianity and isotropy, and multiple testing methods to detect contamination in the field due to point sources. Although these tools are introduced in the Cosmological context, they can be applied to other situations dealing with spherical data. Finally, we will discuss more recent and challenging issues such as the analysis of polarization data, which can be viewed as realizations of random fields taking values in spin fiber bundles.
\end{abstract}

\begin{keywords}
cosmic microwave background, spherical random fields, spherical harmonics, needlets, stochastic geometry, polarization
\end{keywords}
\maketitle

\tableofcontents

\section{BACKGROUND: COSMOLOGICAL MOTIVATIONS}

The status of Cosmology as an observational Science has experienced a
dramatic shift in the last two decades. Until about the year 2000, Cosmology
was known as a data-starved science: important experiments had been carried
over in a number of fields, but the amount and quality of data were very far
from the level needed to produce precise estimates of parameters and
conclusive tests of most Cosmological Models; the situation has changed
completely in the last couple of decades, when a number of experiments have improved the size and precision of existing observations by several orders of magnitude.

To name only a few such experiments, the Cosmic Microwave Background (CMB; to be
discussed at length below) has been probed by the satellite experiments WMAP
and Planck, by several balloon-borne and by ground-based observatories, and it
is going to be further investigated by the forthcoming satellite mission
LiteBIRD; Gamma rays sources have been probed by Satellite Missions
Fermi-LAT and AGILE, and from several observatories on the ground;
ultra-high energy cosmic rays are investigated by huge international
collaborations such as the Pierre Auger observatory; Cosmic Neutrinos are the
object of investigation by IceCube; Gravitational Waves have been detected
by the LIGO--Virgo Collaboration and will be further probed in the next decades by a new
generation of observatories; radio sources are probed by huge
collaborations such as SKA, whereas the Large Scale Structure of the
Universe and weak gravitational lensing are the object of upcoming missions
such as Euclid.

A remarkable feature of all the datasets produced by these observations is the
following: they are all defined on the sphere $\mathbb{S}^{2}.$ For reasons
of space and clarity, we will concentrate most of the discussion below on the
CMB, but the tools that we shall introduce for
spherical data analysis are relevant for many of the other applications as
well.

To understand the CMB radiation, let us recall the Standard Cosmological Model:
it states that the Universe (or the Universe that we
currently observe) started 13.7 Billion years ago in a very hot and
dense state, filled by a plasma of electrons, photons and protons
(we note that this is a large oversimplification, but it does fit our purposes,
see, \textit{e.g.}, \cite{dode2004}, \cite{Durrer}, \cite{Vittorio}, for a more detailed description).
Matter was completely ionized, meaning that the mean kinetic energy of the electrons
was higher than the electromagnetic potential of the protons, and no stable
atoms formed: free electrons have a much larger "scattering surface", \textit{i.e.},
probability to interact with a photon, so that the mean free path of the
latter was very short and the Universe was basically opaque to light. As the
Universe expanded, the energy density and the kinetic energy of the
electrons decreased to a point where it was no longer enough to resist the
attraction of the protons: at this stage, stable hydrogen atoms formed, and the Universe became transparent to light. In this so--called ``Age of Recombination'',
which is now estimated to have taken place about 377,000 years after the Big
Bang, the Universe became transparent to light and these primordial photons
started to move freely across the Universe. Hence, one of the key predictions of the
model is that we should live embedded into this relic radiation, providing an image of the Universe as
it was 13.7 billion years ago. These photons, now observed in microwave frequencies and in every direction on the sky, constitute the Cosmic Microwave Background.

Although the first papers predicting the CMB radiation date back to around 1945,
the first observational proof of its existence was given in a celebrated
experiment by Penzias and Wilson in 1965. However, it was only in the current century that sophisticated satellite missions such as WMAP and, especially, Planck managed to produce high-resolution low-noise full-sky maps of the \emph{Last Scattering
Surface}, see \textbf{Figure \ref{f:planck}}, \cite{planck}, and \cite{starck2014}. In the next Section, we shall introduce the mathematical formalism that we will require for the statistical analysis of
these maps.

\begin{figure}
\includegraphics[width=5in]{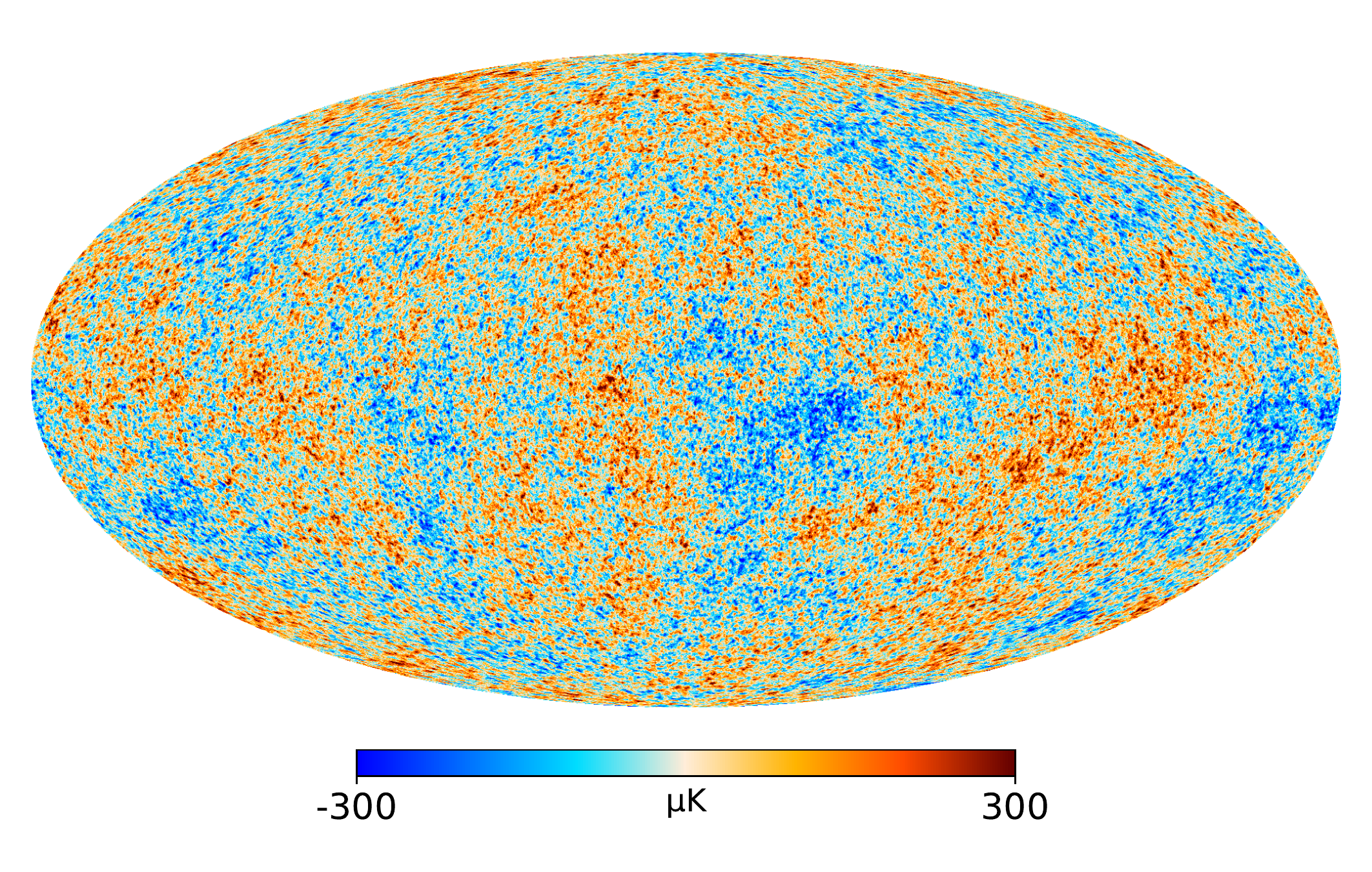}\vspace{-0.5cm}
\caption{CMB Temperature field as measured by the Planck satellite. This is a snapshot of the Early Universe, allowing us to study the components and evolution of the Universe. This spherical field is projected here with a Mollweide projection in galactic coordinates. }
\label{f:planck}
\end{figure}

\section{MATHEMATICAL FOUNDATIONS}

\subsection{Spectral Representation for Isotropic Spherical Random Fields}

Formally, a spherical random field is a collection of real random variables
indexed by points on the sphere, \textit{i.e.}, $\left\{ T:\Omega \times \mathbb{S}%
^{2}\rightarrow \mathbb{R}\right\} ,$ for some suitable underlying
probability space $\left\{ \Omega ,\Im ,\mathbb{P}\right\} ;$ without loss
of generality we take these random variables to be zero mean $\mathbb{E}\left[
T(x)\right] =0,$ we assume they have finite variance $\mathbb{E}\left[
T^{2}(x)\right] <\infty ,$ and we assume isotropy, meaning that the field is
invariant in law to rotations:%
\begin{equation*}
T(x)\overset{Law}{=}T(gx)\text{ for all }g\in SO(3).
\end{equation*}%
Isotropy can be viewed as broadly analogous to the (strong) stationarity
that allows to develop the classical spectral representation approach when
dealing with Time Series, see \textit{e.g.} \cite{BD}. Indeed, under these
conditions, a Spectral Representation Theorem holds on the sphere; more
precisely, we have \citep[see the Appendix or][]{marpecbook}%
\begin{equation}
T(x,\omega )=\sum_{\ell =0}^{\infty }\sum_{m=-\ell }^{\ell }a_{\ell
m}(\omega )Y_{\ell m}(x)\text{ ,}  \label{SRT2}
\end{equation}%
where the identity holds both in $L^{2}(\Omega )$ and in $L^{2}(\Omega
\times \mathbb{S}^{2})$; here we have written explicitly the random
variables as functions of $\omega \in \Omega $ to highlight the decoupling
between random components depending on $\Omega $ and deterministic
components depending on $\mathbb{S}^{2}$.

The deterministic functions $\left\{ Y_{\ell
m}\right\}$ are the \emph{spherical harmonics} and, for fixed $\ell$, they form an orthonormal basis of the eigenspace of the Spherical Laplacian operator $\Delta _{\mathbb{S}^{2}}$, with
eigenvalues corresponding to $-\lambda _{\ell }=-\ell (\ell +1)$. They indeed satisfy%
\begin{equation}
\Delta _{\mathbb{S}^{2}}Y_{\ell m}=-\lambda _{\ell }Y_{\ell m}\text{ , }\ell
=0,1,2,...,\text{ }m=-\ell ,...,\ell \text{ ;}  \label{eig2}
\end{equation}%
again, more details and definitions are given in the Appendix. We therefore have
a decomposition into a set of eigenfunctions, along different frequencies
basically corresponding to the square root of the corresponding eigenvalues
(known as \emph{multipoles }$\ell $ in the spherical case). The most
important message to remember, also for its statistical consequences, is
that in the case of the sphere there are $2\ell +1$ independent components
corresponding to each ``frequency'' (multipole) $\ell$: this fact will play a
crucial role in the asymptotic results to follow.

By expressing the field $T(x)$ in the spherical harmonics base according to Equation \ref{SRT2}, we obtain the \emph{random spherical harmonics coefficients }$\left\{ a_{\ell m}\right\}$. They form an array of zero-mean, complex valued random variables with covariance given by%
\begin{equation*}
\mathbb{E}\left[ a_{\ell m}\overline{a_{\ell ^{\prime }m^{\prime }}}\right]
=\delta _{\ell }^{\ell ^{\prime }}\delta _{m}^{m^{\prime }}C_{\ell }\text{ ;}
\end{equation*}%
in other words, they are uncorrelated whenever $\ell $ or $m$ differ from $%
\ell ^{\prime },m^{\prime }$ and they have variance given by the
non-negative sequence $\left\{ C_{\ell }\right\}$, called the \emph{angular power
spectrum} of the random field $T(x)$. The covariance function of the field can be
expressed as (Schoenberg's Theorem):
\begin{equation*}
Cov(T(x_{1}),T(x_{2}))=\mathbb{E}\left[ T(x_{1}),T(x_{2})\right] =\sum_{\ell
=0}^{\infty }\frac{2\ell +1}{4\pi }C_{\ell }P_{\ell }(\left\langle
x_{1},x_{2}\right\rangle )\text{ ,}
\end{equation*}%
where $P_{\ell }(.)$ is the sequence of Legendre polynomials, to be
introduced in the Appendix and $\langle x,y \rangle$ is the standard scalar product.

It should also be noted that the spherical harmonic coefficients can in
principle be recovered from the observations of the field (up to some
statistical difficulties that we shall discuss below) by means of the
inverse spherical harmonic transform:%
\begin{equation}
a_{\ell m}=\int_{\mathbb{S}^{2}}T(x)\overline{Y_{\ell m}(x)}dx\text{ .}
\label{SHTransform}
\end{equation}

From a cosmological perspective, the angular power spectrum of the CMB, $C_\ell$, is a main source of information. It is sensitive to the exact values of the cosmological parameters, such as the amount of baryonic matter, Dark Matter, or Dark Energy in the Universe \citep{planckPS, planckCosmo}. Modern CMB observations such as Planck are able to measure the angular power spectrum with unprecedented precision, leading to subpercentage uncertainties in most cosmological parameters; because of this, it is commonly said that the field is now in an era of \textit{precision Cosmology}.

\subsection{On the Meaning of Asymptotics in Cosmology}

A deep foundational question must be addressed when dealing with
cosmological data. By definition, Cosmology is a science based on a single
observation---our Universe; then, how is it possible to apply asymptotic
statistics tools?

The asymptotic theory which is used in this framework is meant in the
high-frequency sense, rather than the large domain one; it is strictly
related to so-called fixed-domain asymptotics, the common framework under
which geophysical data are usually handled \cite[see, \textit{e.g.},][and the references therein]{loh,loh2}. It is assumed that inverse Fourier transforms
like Equation \ref{SHTransform} are evaluated for $\ell =1,2,...,L_{\max },$ where $%
L_{\max }$ grows from one experiment to the other. This setting fits exactly
the reality of data collection in Cosmology: a pioneering experiment like
COBE had a resolution to reach an $L_{\max }$ of the order of 20--30, a
number which was raised to 600--800 for WMAP and subsequently to about 2000
by Planck; in the next generation of experiments, these values could
grow further by a factor of at least 2 or 3. More precisely, observations
are collected on a grid of sampling points that depend on the resolution of
the experiment: it arrives to a cardinality of about $12\times 1024^{2}$
points for Planck. By means of these observations, integrals as Equation \ref%
{SHTransform} are approximated as Riemann sums; the order of the multipoles
for which this approximation can work depends on the resolution of the grid.

In practice, however, the observations are not available with the same level
of accuracy on the full sphere: there are some regions of the sky where foreground
contaminants such as the Milky Way cannot be removed efficiently. It is then
convenient to introduce some form of spherical wavelet transform, as we
shall do in the remainder of this section.

In particular, we will consider needlets: a form of spherical wavelets which
was introduced by \citet{npw1,npw2} in the mathematical literature \citep[see also][]{gm1,gm2,gm3} and then to the Cosmological and Statistical
communities by \cite{bkmpAoS,BKMP,mpbb08}, see also \cite{scodellermexican} and the references therein. The basic idea behind needlets can be summarized as follows. Consider first the operator which goes from the random field $T(x)$ to its Fourier components $T_{\ell
}(x)$:
\begin{eqnarray*}
T_{\ell }(x) &=&\sum_{m=-\ell }^{\ell }a_{\ell m}Y_{\ell m}(x)=\sum_{m=-\ell
}^{\ell }\int_{\mathbb{S}^{2}}T(z)\overline{Y}_{\ell m}(z)dzY_{\ell m}(x) \\
&=&\int_{\mathbb{S}^{2}}T(z)\sum_{m=-\ell }^{\ell }\overline{Y}_{\ell
m}(z)Y_{\ell m}(x)dz=\int_{\mathbb{S}^{2}}T(z)\frac{2\ell +1}{4\pi }P_{\ell
}(\left\langle z,x\right\rangle )dz\text{ ,}
\end{eqnarray*}%
where the last equality comes from Equation \ref{addition} in the Appendix. In other words, the Fourier
components can be realized as the projections of the fields on the ``frequency'' (multipole) component $\ell $ by means of the kernel operator $\frac{2\ell +1}{4\pi }P_{\ell }(\left\langle x,.\right\rangle )$. These multipole components of the field $T_\ell$ are perfectly localized in harmonic domain (as they are projected onto a single multipole) at the expense of losing all spatial localization. This can be a problem when different parts of the sphere are observed with different levels of noise or with incomplete observations.

The idea of needlets is to partially give up the perfect
localization in the harmonic domain in order to obtain better localization properties in the real domain. In particular, let us now consider a smooth (possibly $C^{\infty }$)
non-negative function $b(.)$, supported in the compact domain $(\frac{1}{B}%
,B),$ with $B>1$; we additionally impose that its square satisfies the partition of
unity property, namely
\begin{equation*}
\sum_{j =0}^{\infty }b^{2}\left(\frac{\ell }{B^{j}}\right)\equiv 1\text{ , for all }%
\ell=1,2,....
\end{equation*}%
Note that the elements in the sum are different from zero only for $\ell \in
(B^{j-1},B^{j+1}).$ Now consider also a grid of cubature points $\left\{ \xi
_{jk}\in \mathbb{S}^{2}\right\} _{j=1,2,...;k=1,2,...N_{j}}$, with growing
cardinality $N_{j}$ of order $B^{2j},$ nearly equispaced so that $d_{\mathbb{%
S}^{2}}(\xi _{jk},\xi _{jk^{\prime }})\simeq cB^{-j},$ where we have
introduced the spherical geodesic distance%
\begin{equation*}
d_{\mathbb{S}^{2}}(x,y):=\arccos \left( \left\langle x,y\right\rangle
\right) \text{ .}
\end{equation*}%
The needlet projection coefficients are defined by%
\begin{equation*}
\beta _{jk}:=\int_{\mathbb{S}^{2}}T(x)\psi _{jk}(x)dx\text{ , }\psi
_{jk}(x):=\sum_{\ell }b\left(\frac{\ell }{B^{j}}\right)\frac{2\ell +1}{4\pi }P_{\ell
}(\left\langle x,\xi _{jk}\right\rangle )\sqrt{\lambda _{jk}}\text{ ,}
\end{equation*}%
where $\left\{ \lambda _{jk}\right\} $ are cubature weights, which can be
chosen in such a way as to ensure the identity%
\begin{equation*}
\sum_{k=1}^{N_{j}}Y_{\ell m}(\xi _{jk})\overline{Y_{\ell ^{\prime
}m^{\prime }}}(\xi _{jk})\lambda _{jk}=\delta _{\ell }^{\ell ^{\prime
}}\delta _{m}^{m^{\prime }}\text{ , for all }\ell ,\ell ^{\prime }\leq
\lbrack B^{j+1}]\text{ .}
\end{equation*}%
It is then easy to see that the following reconstruction formula holds:%
\begin{equation*}
T(x)=\sum_{jk}\beta _{jk}\psi _{jk}(x)=\sum_{j}\widetilde{T}_{j}(x)\text{ ,}
\end{equation*}%
where the needlet components are defined by%
\begin{equation}
\widetilde{T}_{j}(x)=\sum_{\ell =0}^{\infty }b^{2}\left(\frac{\ell }{B^{j}}\right)T_{\ell }(x)\text{ .}  \label{needcomponents}
\end{equation}
Alternatively, we could introduce the \emph{needlet projection kernel}%
\begin{equation*}
\Psi _{j}(x,y)=\sum_{\ell =0}^{\infty }b^{2}\left(\frac{\ell }{B^{j}}\right)\frac{2\ell
+1}{4\pi }P_{\ell }(\left\langle x,y\right\rangle )
\end{equation*}%
which acts on the field $T(.)$ in such a way that%
\begin{eqnarray*}
\Psi _{j} :T(x)\rightarrow &&\int_{\mathbb{S}^{2}}T(z)\Psi _{j}(z,x)dz =\\
&=&\sum_{\ell =0}^{\infty }b^{2}\left(\frac{\ell }{B^{j}}\right)\int_{\mathbb{S}%
^{2}}T(z)\frac{2\ell +1}{4\pi }P_{\ell }(\left\langle z,x\right\rangle )dz=%
\widetilde{T}_{j}(x)\text{ .}
\end{eqnarray*}%
Therefore, it is now readily seen that this needlet projection operator projects the
field $T(.)$ on a linear combination of eigenfunctions $T_{\ell },$ for $%
\ell \in (B^{j-1},B^{j+1}).$ More importantly, the needlet kernel projector
enjoys much better localization properties than the simple Legendre
projector; indeed, it has been shown \cite[see][]{npw1,npw2,gm1} that for all $M\in \mathbb{N}$ there exists a positive constant $C_{M}$ s.t.%
\begin{equation}
\left\vert \Psi _{j}(x,y)\right\vert \leq C_{M}\frac{B^{2j}}{(1+B^{j}d_{%
\mathbb{S}^{2}}(x,y))^{M}}\text{ .}  \label{needkern}
\end{equation}
In words, this means that for any fixed angular distance, the kernel decays to zero faster than any polynomial in $j.$ As a consequence, it is
possible to evaluate needlet projections even in the case of sky maps which
are only partially observed, given that, for any ``masked'' (\textit{i.e.}, unobservable)
region $G\subset \mathbb{S}^{2}$ and $x\in \mathbb{S}^{2}\backslash G$ \ we
have%
\begin{eqnarray*}
\int_{\mathbb{S}^{2}\backslash G}T(z)\Psi _{j}(z,x)dz &=&\int_{\mathbb{S}%
^{2}}T(z)\Psi _{j}(z,x)dz-\int_{\mathbb{G}}T(z)\Psi _{j}(z,x)dz \\
&=&\widetilde{T}_{j}(x)+R_{j}(x)\text{ ,}
\end{eqnarray*}%
where%
\begin{eqnarray*}
\mathbb{E}\left[ \left\vert R_{j}(x)\right\vert \right] &\leq &\int_{G}%
\mathbb{E}\left[ \left\vert T(z)\right\vert \right] \left\vert \Psi
_{j}(z,x)\right\vert dz \\
&\leq &Const\times \int_{G}\left\vert \Psi _{j}(z,x)\right\vert dz \\
&\leq &Const\times C_{M}\int_{G}\frac{B^{2j}}{(1+B^{j}d_{\mathbb{S}%
^{2}}(x,z))^{M}}dz \\
&\leq &C_{M}^{\prime }\times 4\pi \times B^{j(2-M)}\times d_{\mathbb{S}%
^{2}}^{-M}(x,G)\rightarrow 0\text{ , as }j\rightarrow \infty \text{ ,}
\end{eqnarray*}%
where $d_{\mathbb{S}^{2}}(x,G)$ is defined as the infimum of the distances
between $x$ and the points of $G$. The decay to zero is itself
super-exponential, and a very broad numerical and empirical evidence has
shown that needlet components of spherical random fields are minimally
affected by unobserved regions, in the high-frequency sense, and for
reasonable amounts of masked information.

Needlet fields enjoy another very important property for statistical
analysis. In particular, it has been established that \cite[see][]{BKMP}%
\begin{equation*}
\left\vert Corr(\widetilde{T}_{j}(x),\widetilde{T}_{j}(y))\right\vert \leq
Const\times \frac{1}{(1+B^{j}d_{\mathbb{S}^{2}}(x,y))^{M}}\text{ ;}
\end{equation*}%
in other words, the correlation between the needlet fields evaluated on any
two points decays faster than any polynomial. At first sight, one may think
this is a direct consequence of the localization properties of the needlet
kernel projector (see Equation \ref{needkern}), but this is not the case:
uncorrelation is not in general a consequence of kernel localization. To
understand this point, consider the extreme case of a delta-like projector
such that $\delta _{x}:T\rightarrow T(x)$; this is obviously an example of
perfect localization in real space, but nevertheless the correlation
between any two projected components $T(x)$ and $T(y)$ does not decay to
zero in any meaningful sense. It is the combination of localization in real
and harmonic space, a defining feature of needlets, that makes fast
uncorrelation possible.

In the next Section, we show how these uncorrelation properties make it possible to study principled statistical inference with an asymptotic justification.

\section{CMB MAP RECONSTRUCTION AND COMPONENT SEPARATION}
The first statistical issue we shall consider is the so-called CMB map reconstruction (sometimes also called component separation or foreground removal). This is the cosmological instance of the image
reconstruction issues that are common in many fields, and because of this, the
techniques introduced in this chapter are likely to be applicable to
many different areas.

The problem arises from a very natural question: when observing the
celestial sky, how can you distinguish CMB radiation from the many other
galactic and extragalactic sources that lie between us and the last
scattering surface? The key remark is that CMB observations
are collected on many different electromagnetic frequencies, where they
follow a Planckian black-body emission governed by a single (temperature)
parameter \cite[see][]{Durrer,planckMM,axelsson}. Therefore, we can decompose the observation at frequency $\nu_k$ and point $x$, $T(x,\nu_k)$, after suitable transformations, in the following way:%
\begin{equation*}
T(x;\nu _{k})=T(x)+F(x;\nu _{k})+N(x,\nu _{k})\text{ ,}
\end{equation*}%
where $N(x,\nu _{k})$ denotes instrumental noise and $F(x;\nu _{k})$ are
``foreground residuals'', \textit{i.e.} the collection of emission by galactic dust,
astrophysical sources, and other mechanisms both from within the Galaxy and from outside it. The crucial identifying assumption is that the CMB ``signal'' $T(x)$ is
constant across the different electromagnetic channels (although by no means
constant over the sky directions $x\in \mathbb{S}^{2}$). Assume as a
starting point that the noise $N(.)$ has zero mean, constant variance and is
uncorrelated over different electromagnetic frequencies; for the foreground,
assume again that it has finite variance, and that its variance covariance
matrix over different channels is given by%
\begin{equation*}
\mathbb{E}\left[ F(x;.)F(x;.)^{T}\right] =\Omega \text{ .}
\end{equation*}%
The best linear unbiased estimates of $T(x),$ viewed as a fixed parameter at
a given $x$, is then simply given by the generalized least squares solution.%
\begin{equation*}
\widehat{T}_{ILC}(x)=\left\{ P^{T}\left( \Omega +\sigma _{N}^{2}I_{K}\right)
^{-1}P\right\} ^{-1}P^{T}\left( \Omega +\sigma _{N}^{2}I_{K}\right)
^{-1}T(x,.)\text{ ,}
\end{equation*}%
where $P=(1,1,...,1)^{T}$ is the $K$-dimensional vector of ones, and by $%
T(x,.)$ we mean the $K$-dimensional column vector with the observations
across the different frequencies. This map-making algorithm is known as ILC
(Internal Linear Combination) in the cosmological literature. To be
implemented, it requires an estimate of the covariance matrix $%
\Omega$; here is where the needlet approach can come into play.

The covariance matrix $\Omega $ represents the dependence structure and the
overall magnitude of different foregrounds over different frequencies. This
covariance varies wildly over different regions and over different scales:
some foregrounds are dominant over large scales (\textit{e.g.}, galactic dust) whereas other
dominates on smaller scales (localized sources). To take into account the
fact that variance can be different across different scales, the idea is to
introduce the $NILC$ (Needlet Internal Linear Combination) map--making
algorithms, defined by \cite[see][and references therein]{Delabrouille}
\begin{equation*}
\widehat{T}_{NILC;j}(x)=\left\{ P^{T}\left( \widehat{\Omega }_{j}+\sigma
_{N}^{2}I_{K}\right) ^{-1}P\right\} ^{-1}P^{T}\left( \widehat{\Omega }%
_{j}+\sigma _{N}^{2}I_{K}\right) ^{-1}\widetilde{T}_{j}(x,.)\text{ ,}
\end{equation*}%
\begin{equation*}
\widehat{T}_{NILC}(x)=\sum_{j}\widehat{T}_{NILC;j}(x)\text{ ,}
\end{equation*}%
where the covariance matrix $\Omega _{j}$ is estimated in a first step as%
\begin{equation*}
\widehat{\Omega }_{j}=\frac{1}{4\pi }\int_{\mathbb{S}^{2}}\left( T(x;\nu
_{k})-\overline{T}(x)\right) \left( T(x;\nu _{k})-\overline{T}(x)\right)
^{T}dx\text{ , }\quad\overline{T}(x)=\frac{1}{K}\sum_{k}T(x;\nu _{k})\text{ .}
\end{equation*}%

As a second step (see, \textit{e.g.}, \cite{Car2} and the references therein), one may
take into account the spatial variability of the matrices $\Omega =\Omega
(x)$, another task that can be addressed by needlets because of their
spatial localization. The idea is then to partition the celestial sky into
subregions $A_{r}$, with $r=1,...,R$, such that $A_{r_{1}}\cap A_{r_{2}}=\varnothing$ and $%
\cup _{r=1}^{R}A_{r}=\mathbb{S}^{2}$. We can write now%
\begin{equation*}
\widehat{\Omega }_{j,r}=\frac{1}{4\pi }\int_{A_{r}}\left( T(x;\nu _{k})-%
\overline{T}(x)\right) \left( T(x;\nu _{k})-\overline{T}(x)\right) ^{T}dx
\end{equation*}%
\begin{equation*}
\widehat{T}_{NILC;j,r}(x)=\left\{ P^{T}\left( \widehat{\Omega }_{j,r}+\sigma
_{N}^{2}I_{K}\right) ^{-1}P\right\} ^{-1}P^{T}\left( \widehat{\Omega }%
_{j,r}+\sigma _{N}^{2}I_{K}\right) ^{-1}\widetilde{T}_{j}(x,.)\text{ , }\quad x\in
A_{r}\text{ ,}
\end{equation*}%
\begin{equation*}
\widehat{T}_{NILC;r}(x)=\sum_{j}\widehat{T}_{NILC;j,r}(x)\text{ , }\quad x\in A_{r}%
\text{ .}
\end{equation*}%
This method has proven to be very efficient and compares favorably with
other existing techniques, see again \cite{Car2} for further discussion.

For the rest of this paper, we shall assume to be dealing with maps that
have been made according to one of these procedures: in the next two
sections, we discuss how to test for residuals point sources and for
deviations from the assumptions of isotropy and Gaussianity.

\section{POINT SOURCE DETECTION AND SEARCH FOR GALAXY CLUSTERS}
\label{s:ps}
After a CMB map has been built from observations, as discussed in the previous Section, a
natural question to ask is whether all the foreground contaminants have been
properly removed. In this Section, we focus on so-called point sources:
these are mainly galaxies or clusters of galaxies which appear as point-like
local maxima in the maps \citep{planckCS}.

The proper approach to handling such issues is clearly a form of multiple
testing. Indeed, for an experiment like Planck, there are several thousand
local maxima that could be identified as potential candidates for point
sources: at each of these locations one may wish to run a significance test,
but controlling the size of the test is then a daunting task. Very recently,
this topic has been addressed in \cite{CarronDuque,ChengCammarota} by
means of a variation in the so-called STEM (Smoothing and Testing Multiple
Hypotheses) algorithm, see also \cite{Schwartzman:2011,chengschwartzman1,chengschwartzman2,chengschwartzman2015m,cmw2014}.

The idea of the procedure is rather natural and can be explained as follows.
Because we are looking for point sources, large-scale fluctuations can be
considered to produce just noise without information on the signal; in the
first step, it is then natural to filter the map with a needlet transform and
consider only the needlet components $\widetilde{T}_{j}$ (as defined in Equation \ref{needcomponents}), for ``high enough'' $j$, as we will discuss later. This
procedure does increase the signal-to-noise ratio considerably, as
illustrated in \textbf{Figure \ref{f:ps}}, see also \cite{scodeller, scodeller2}.

\begin{figure}
\centering
\begin{minipage}{0.48\textwidth}
    \includegraphics[width=\linewidth]{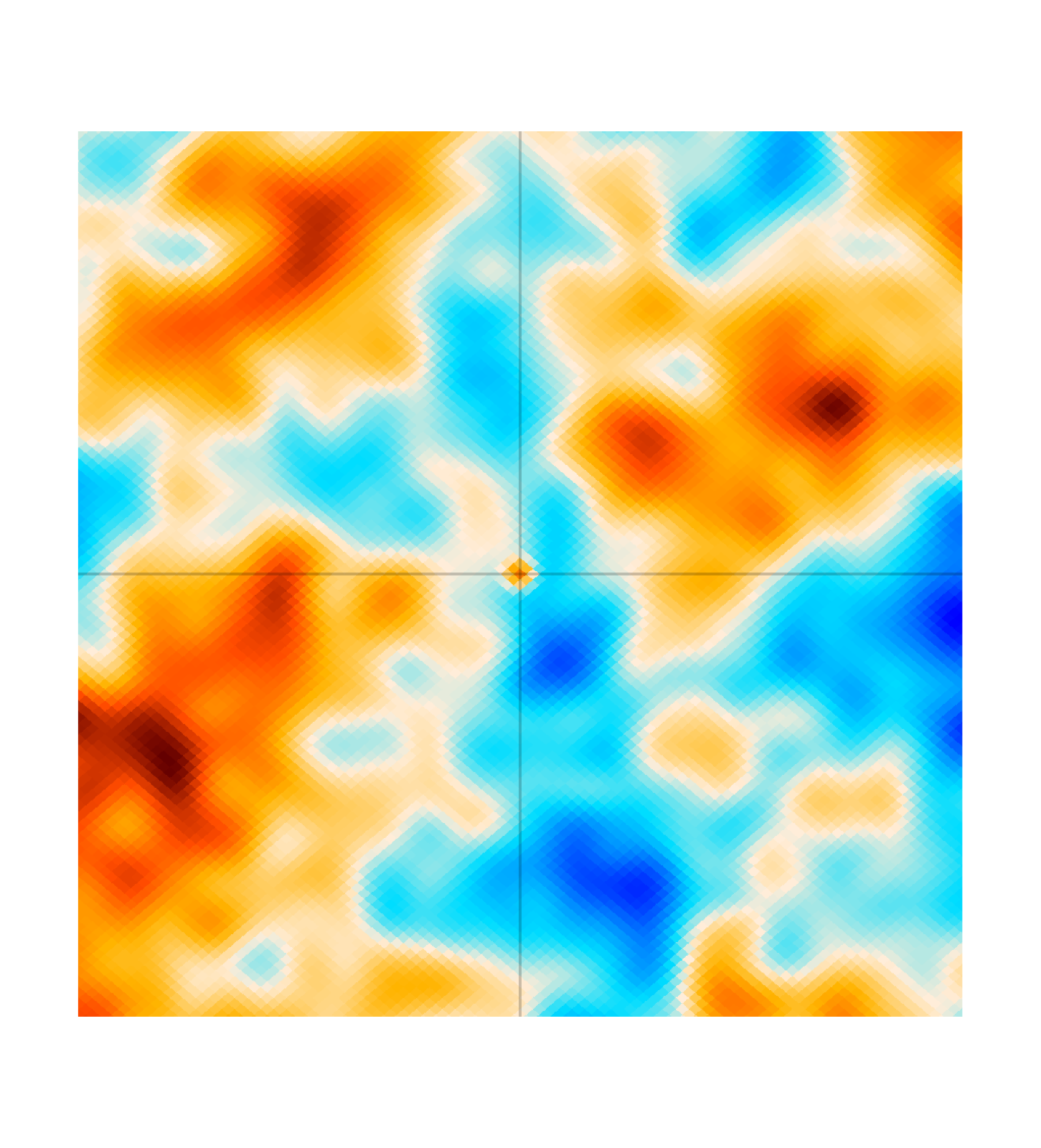}
\end{minipage}
\begin{minipage}{0.48\textwidth}
    \includegraphics[width=1\linewidth]{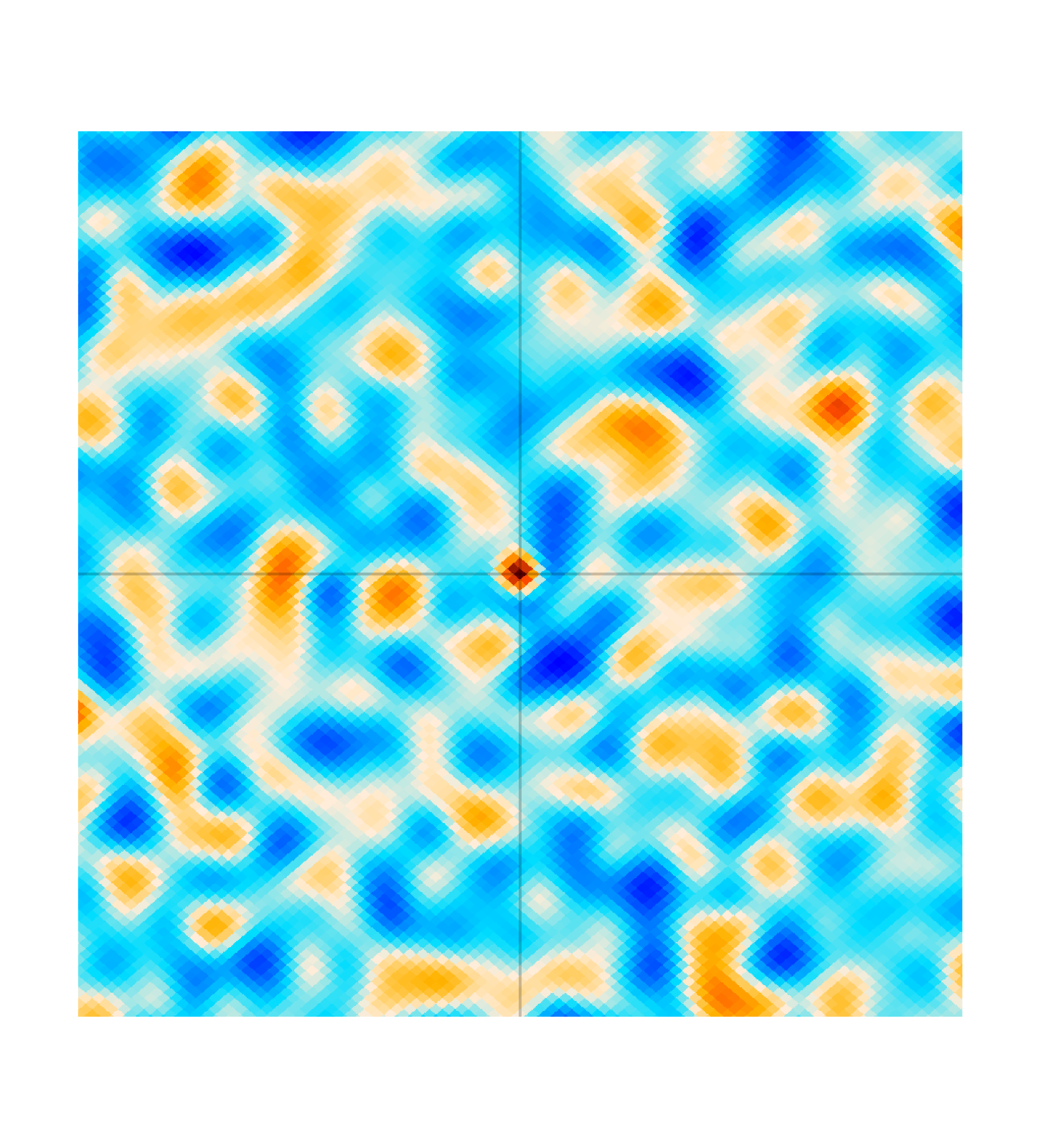}
\end{minipage}
\caption{Effect of a point source on the CMB. On the left, CMB Temperature $T(x)$; on the right, a single needlet component, $\widetilde{T}_{j}$. A point source has been introduced in the center of the map. The signal--to--noise ratio at the peak in this example goes from $0.4$ to $4.4$. Both images are $3^\circ \times3^\circ$ patches of the sky.}
\label{f:ps}
\end{figure}

In a second step, we focus on the candidate sources, which correspond to the
local maxima of the field $\widetilde{T}_{j}$. Our aim is to establish the $p$-value of each of these local maxima. Therefore, we need to study the density of these maxima. It turns out that an exact result can be given \cite[see][]{ChengCammarota}; we shall denote the probability density of maxima by $f_j$.

The expected density of maxima $f_j$ is compared to the observed distribution in a single realization in \textbf{Figure \ref{f:psdist}}, for a needlet-filtered map at $B=1.2$, $j=39$. At these small scales, corresponding to the size of point sources, the theoretical expectaction is remarkably close to the realized distribution, even when considering a single map.

\begin{figure}
\centering
    \includegraphics[width=1\linewidth]{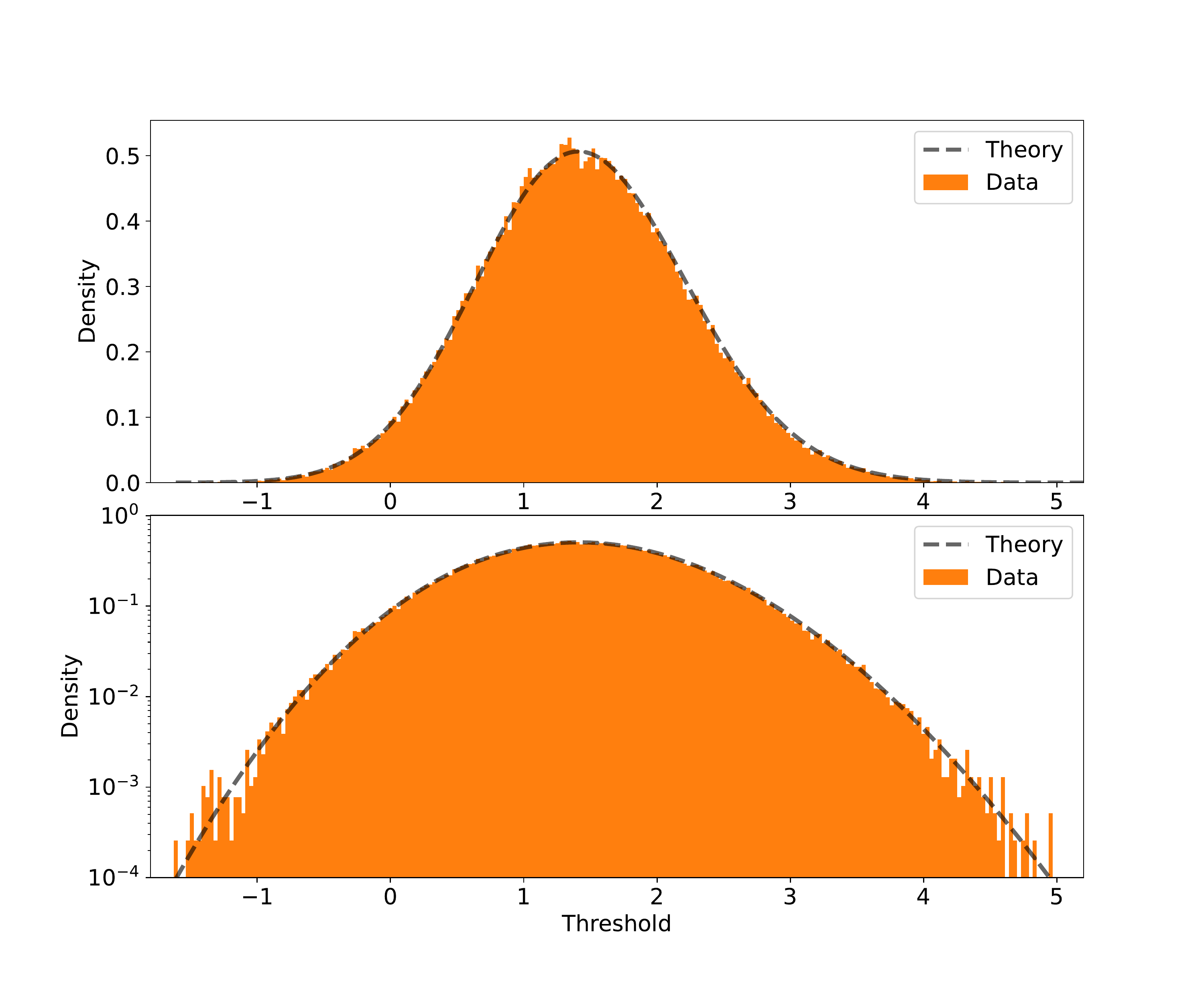}
\caption{Distribution of the maxima of a needlet-filtered map, $\widetilde{T}_{j}$ with a small-scale needlet ($B=1.2$, $j=35$). In dashed black line, the theoretical prediction, $f_j$; in orange bars, the realization on a single map. Top: linear scale. Bottom: logarithmic scale. It can be seen that the maxima density closely follows the prediction, even in a single realization.}
\label{f:psdist}
\end{figure}

We are then in the position to compute the $p$-value of each maximum: it is the probability of the maximum being larger than $\widetilde{T}_{j}(\xi _{k})$, with $k=1,2,...K$ for the $K$ maxima located at $%
\xi _{1},...\xi _{K}\in \mathbb{S}^{2}$. Therefore, the $p$-value of each maximum is
\begin{equation*}
p_{k,j}:=\int_{\widetilde{T}_{j}(\xi _{k})}^\infty f_{j}(u)du\text{ .}
\end{equation*}%
As a third step, we implement a Benjamini-Hochberg procedure for the
identification of point sources \citep{BH}. To this aim, let us first reorder the $p$%
-values and sources so that they are increasing, $p_{(1),j}\leq
p_{(2),j}\leq ...\leq p_{(K),j};$ we then fix a value $0<\alpha <1,$
corresponding to the expected proportion of False Discoveries that we are
willing to tolerate among the reported Point Sources. Namely, we consider $\xi _{k}$ to be the location of a Point Source whenever%
\begin{equation*}
u_{k,j}:=p_{(k),j}-\alpha \frac{k}{K}<0\text{ ,}
\end{equation*}%
that is, when the $(k)$\textsuperscript{th} ordered $p$-value is smaller that $\alpha $ times the expected $p$-value of the $(k)$\textsuperscript{th} maximum under the null hypothesis of a purely Gaussian field.

The technical analysis of the properties of this procedure is based on a
result of some independent interest, namely the high-frequency ergodicity of
the empirical distribution of the maxima. In other words, it turns out to be
possible to show that, as $j\rightarrow \infty ,$ for all fixed $u\in
\mathbb{R}$%
\begin{equation*}
\frac{Card\left\{ \widetilde{T}_{j}(\xi _{k}):\widetilde{T}_{j}(\xi
_{k})>u\right\} }{\int_{u}f_{j}(x)dx}\rightarrow _{p}1\text{ .}
\end{equation*}%
In turn, this result follows from the Kac--Rice representation of maxima and
a high-frequency uncorrelation result on the gradient and Hessian of the
needlet fields, see \cite{ChengCammarota} for more discussion and details.

The previous result forms the mathematical basis for establishing the two
main properties of the Benjamini-Hochberg procedure in this context, namely
\begin{enumerate}
    \item[(a)] False Discovery Rate control: as $j\rightarrow \infty$, we have that%
\begin{equation*}
\mathbb{E}\left[ \frac{Card\left\{ \widetilde{T}_{j}(\xi _{k}):u_{k,j}<0%
\text{ and no point source at }\xi _{k}\right\} }{Card\left\{ \widetilde{T}%
_{j}(\xi _{k}):u_{k,j}<0\right\} }\right] <\alpha
\end{equation*}
    \item[(b)] Power control: as $j\rightarrow \infty ,$ the proportion of point sources
that are detected converges to unity.
\end{enumerate}



Actually, the result established in \cite{ChengCammarota} is stronger than
(b), because the number of sources is allowed to grow with $j,$ in an effort
to mimic in this context a high-dimensional framework where the complexity
of the signal (point sources) grows with the amount of observations. The
approach described here was successfully applied to Planck CMB temperature
maps and led to the possible discovery of a previously undetected source, see \cite%
{CarronDuque} for more details.

A final remark on this topic: in this section, the procedure we have introduced
is based on the framework where the signal is made up by the sources to be
detected, whereas noise is the given by a random Gaussian field, namely the
CMB radiation which we consider as signal for most of the paper. This is
sometimes summarized by the ironical motto \textit{``your noise is my signal''} in some
papers in this area. In the next section, we will further probe goodness-of-fit tests for the basic assumptions on the random fields at hand, namely Gaussianity and isotropy.

\section{TESTING FOR GAUSSIANITY AND ISOTROPY}

A crucial assumption to be investigated on spherical random fields is
whether they are actually Gaussian and isotropic. A very important tool
for this task is the so-called \textit{Lipschitz--Killing Curvatures}; in the cosmological literature, the term ``Minkowski functionals'' is more often used, but the two
definitions are equivalent up to some scaling constants and reordering of indexes.

The proper definition of Lipschitz-Killing Curvatures requires some
geometric background. Let us first consider some convex set $A\subset
\mathbb{R}^{d}$, and let us define the $\emph{Tube}$ of radius $r$ around $A$
as the set of points at distance smaller or equal than $r$ from $A:$%
\begin{equation*}
Tube(A,r):=\left\{ x\in \mathbb{R}^{d}:d(x,A)\leq r\right\} \text{ .}
\end{equation*}%
The \emph{Tube Formula} \citep[see][]{RFG,AT2} proves that the volume
of the Tube admits a finite-order Taylor expansion into powers of $r$:
\begin{equation*}
Meas\left\{Tube(A,r)\right\} =\sum_{i=0}^{d}\,\omega _{d-i}\:\mathcal{L}%
_{i}(A)\:r^{i}\text{ , where }\omega _{i}=\frac{\pi ^{i/2}}{\Gamma (\frac{i}{2}%
+1)}\text{ ,}
\end{equation*}%
$\Gamma (\alpha )=\int_{0}^{\infty }t^{\alpha -1}\exp (-t)dt$ denoting the
Gamma function and $\omega _{i}$ represents the volume of the $i$%
--dimensional unit ball ($\omega _{0}=1,$ $\omega _{1}=2$, $\omega _{2}=\pi$,
$\omega _{3}=\frac{4}{3}\pi$, $\dots$). The coefficients $\mathcal{L}_{i}(A)$
are the so-called Lipschitz-Killing Curvatures of $A$; for instance, in
two-dimensional space there are three of them, equal to the area, half the boundary length, and
the Euler--Poincar\`{e} Characteristic of the set $A$. The latter is a topological invariant
that plays a crucial role in Mathematics: in the two-dimensional case, it is
equivalent to the number of connected components of $A$ minus its ``holes''.
It can be shown that any sufficiently regular functional of $A$ can be
written in terms of the Lipschitz-Killing Curvatures alone; in this sense,
they can be viewed as some sort of sufficient statistic for the information
encoded into $A$, see again \cite{RFG,AT2}.

The Lipschitz-Killing Curvatures (or equivalently the Minkowski Functionals,
which are the same quantities up to some constants and relabelling of
indexes) are one of the most popular statistical tools for Cosmological data
analysis. One reason for their popularity is that, quite surprisingly, it is
possible to give simple analytic forms for their expected values, under
the null assumptions of Gaussianity and isotropy: they are hence very
natural tools for goodness-of-fit tests. More precisely, let us define the
derivative of the covariance function at the origin as
\begin{equation*}
\mu ^{2}:=\left. \frac{\partial }{\partial y}\Gamma (\left\langle
x,y\right\rangle )\right\vert _{y=x}=\sum_{\ell }\frac{2\ell +1}{4\pi }\frac{%
\lambda _{\ell }}{2}C_{\ell }\text{ .}
\end{equation*}%
Let us also introduce the excursion sets $A_{u},$ which are simply those
subsets of the sphere where the field is above some given value $u:$%
\begin{equation*}
A_{u}(f):=\left\{ x\in \mathbb{S}^{2}:f(x)\geq u\right\} \text{ .}
\end{equation*}%
The idea is to compute the Lipschitz-Killing Curvatures on the (random) excursion
sets, and then compare their observed values on real data with their
expectation under Gaussianity and isotropy. It could be imagined that
computing the latter might be a daunting task; on the contrary, it turns out
that a completely explicit expression holds for random fields defined on
general manifolds. In particular, the following Gaussian Kinematic Formula
holds :%
\begin{equation*}
\mathbb{E}\left[ \mathcal{L}_{i}(A_{u}(f))\right] =\sum_{k=0}^{2-i}\flag{k+i}{k}\,
\mathcal{L}_{k+i}(\mathbb{S}^{2}) \, \rho _{k}(u) \, \mu ^{k/2}\text{ },
\end{equation*}%
where we have introduced the flag coefficients%
\begin{equation*}
\flag{d}{k}=\binom{d}{k}\frac{\omega _{d}}{\omega _{k}\,\omega _{d-k}}
\end{equation*}%
and the functions%
\begin{eqnarray*}
\rho _{0}(u)&=& 1-\Phi (u) \text{ , }\\
\rho _{k}(u)&=& \frac{1}{(2\pi )^{\frac{k+1}{2}}}H_{k-1}(u)\exp (-%
\frac{u^{2}}{2})\text{ , for }k\geq 1\text{ ,}
\end{eqnarray*}%
where $\Phi (.)$ denotes the standard Gaussian cumulative distribution whereas
$H_{k}(u)$ stands for the Hermite polynomials, given by%
\begin{equation*}
H_{k}(u)=(-1)^{k}\exp (\frac{u^{2}}{2})\frac{d^{k}}{du^{k}}\exp (-\frac{u^{2}%
}{2})=(-1)^{k}\frac{1}{\phi (u)}\frac{d^{k}}{du^{k}}\phi (u)\text{ ;}
\end{equation*}%
here, $\phi (u)$ denotes as usual the standard Gaussian probability distribution. The first Hermite polynomials are $H_{0}(u)=1$, $H_{1}(u)=u$, $H_{2}(u)=u^{2}-1$.

In particular, we obtain the following values for the case of the excursion
area, the boundary length and the Euler-Poincar\'{e} characteristic:%
\begin{eqnarray*}
\mathbb{E}\left[ \mathcal{L}_{2}(A_{u}(f))\right] &=&4\pi (1-\Phi (u))\text{ ,%
} \\
\mathbb{E}\left[ \mathcal{L}_{1}(A_{u}(f))\right] &=&\pi \exp \left(-\frac{u^{2}}{%
2}\right) \, \mu ^{1/2}\text{ ,} \\
\mathbb{E}\left[ \mathcal{L}_{0}(A_{u}(f))\right] &=&2\left[ u \,\phi (u) \,\mu + 1-\Phi(u) \right]
\end{eqnarray*}%
It turns out to be especially convenient to compute Lipschitz-Killing
Curvatures on the needlet components of the random fields, $\widetilde{T}_{j}$. Indeed, in these
circumstances, the expected values are simply obtained by replacing the
needlet covariance derivative for the parameter $\mu$, which is simply
given by:%
\begin{equation*}
\mu _{j}^{2}:=\sum_{\ell }\frac{2\ell +1}{4\pi }\frac{\lambda _{\ell }}{2}\:%
b^{2}\!\left(\frac{\ell }{B^{j}}\right)C_{\ell }\text{ .}
\end{equation*}%
The main advantage is that it can be shown that the variances of
Lipschitz-Killing Curvatures decay to zero and a Central Limit Theorem holds
in the high frequency limit, see, \textit{e.g.}, \cite{CM2015}, \cite{SheTo} and the
references therein. We have that%
\begin{equation*}
\frac{\mathcal{L}_{i}(A_{u}(\widetilde{T}_{j}))-\mathbb{E}\left[ \mathcal{L}_{i}(A_{u}(\widetilde{T}_{j}))%
\right] }{\sqrt{Var\left[ \mathcal{L}_{i}(A_{u}(\widetilde{T}_{j}))\right] }}\rightarrow
_{d}N(0,1)\text{ , as }B^{j}\rightarrow \infty \text{ , }i=0,1,2\text{ .}
\end{equation*}%
Moreover, we have
\begin{equation*}
Var\left[ \frac{\mathcal{L}_{i}(A_{u}(\widetilde{T}_{j}))}{\mathbb{E}\left[ \mathcal{L}%
_{i}(A_{u}(\widetilde{T}_{j}))\right] }\right] =O\left(\frac{1}{B^{2j}}\right)\text{ , as }j\rightarrow
\infty \text{ .}
\end{equation*}%

\begin{figure}
\centering
\begin{minipage}{0.48\textwidth}
    \includegraphics[width=\linewidth]{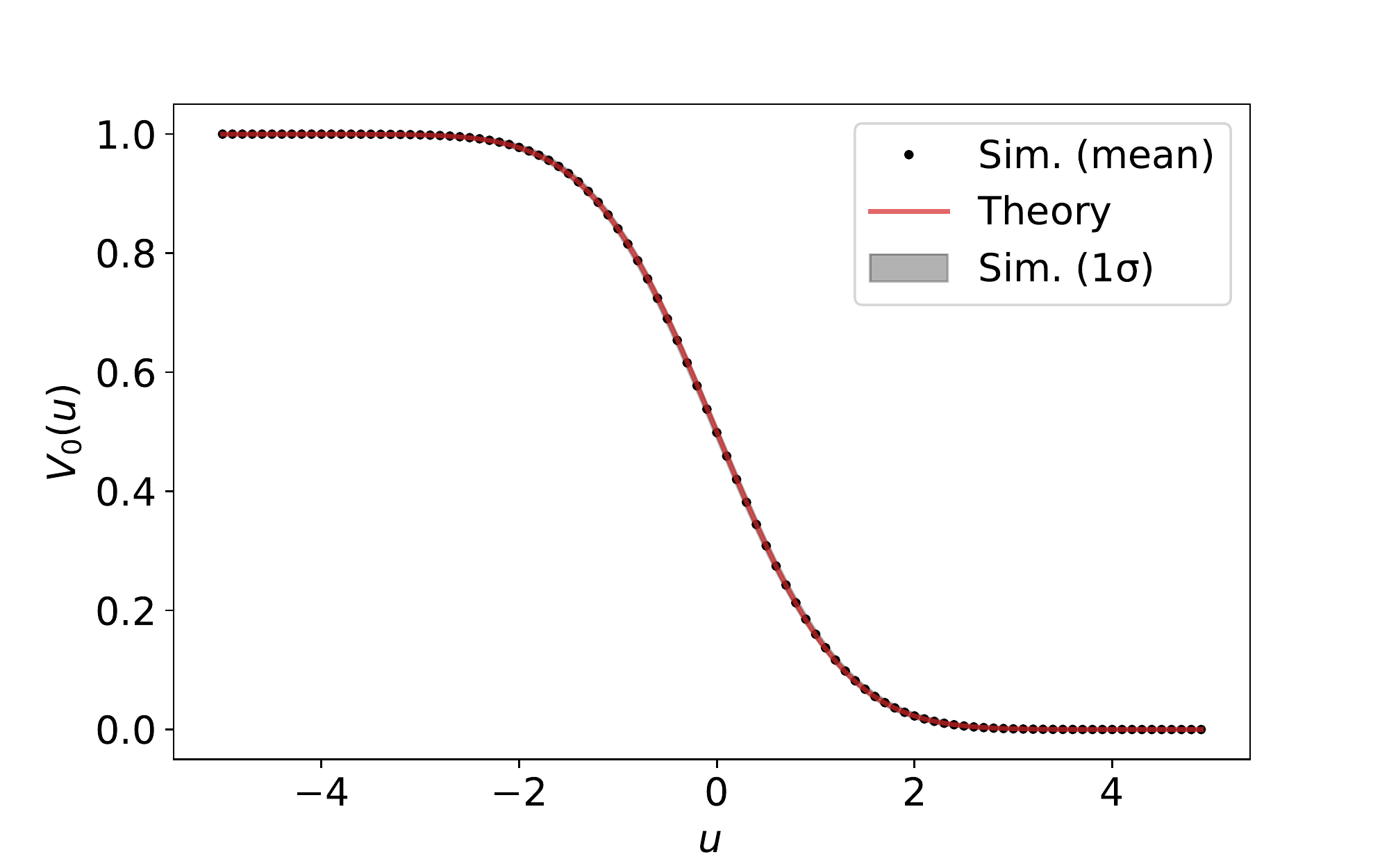}
\end{minipage}
\begin{minipage}{0.48\textwidth}
    \includegraphics[width=1\linewidth]{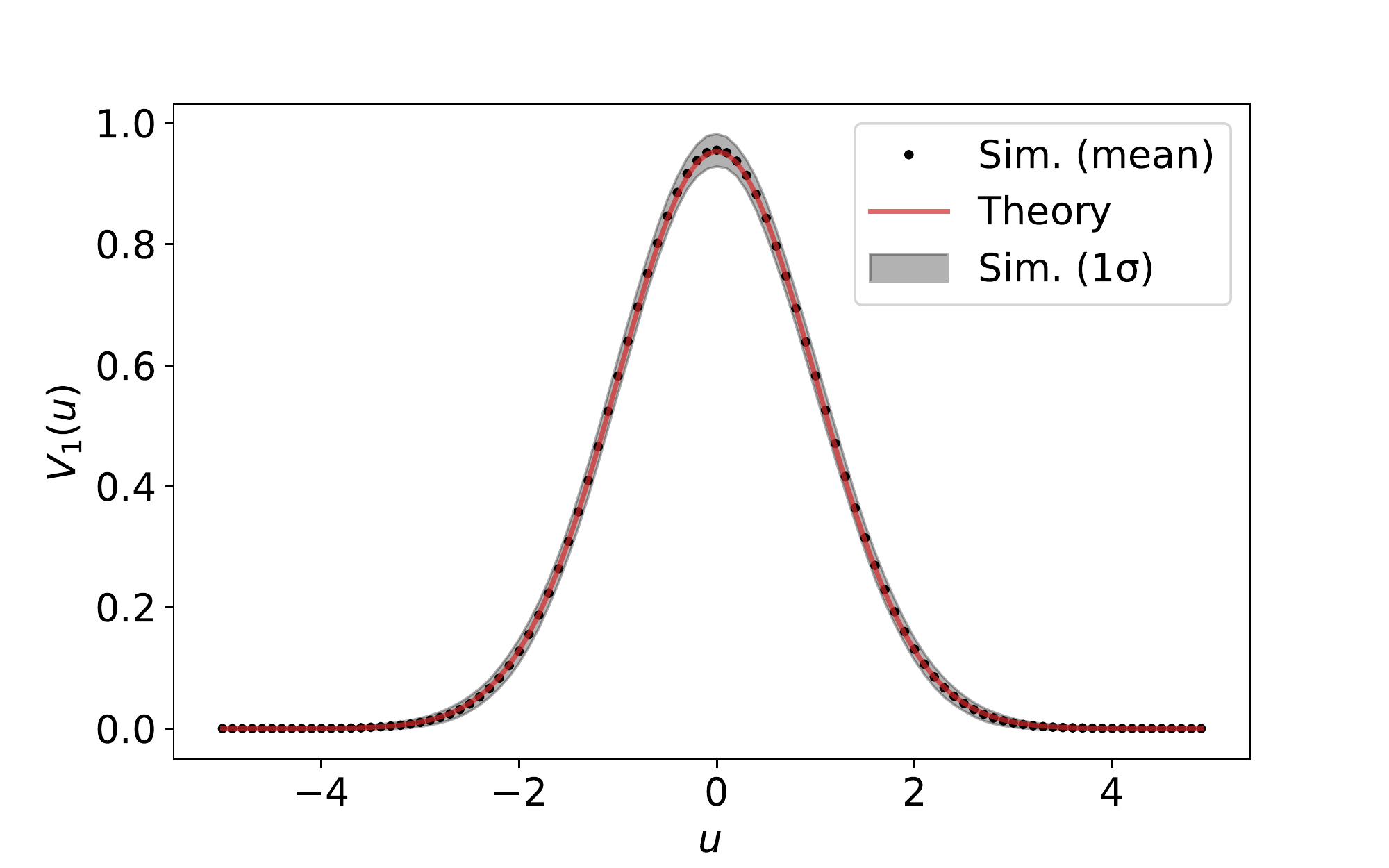}
\end{minipage}
\vspace{-0.6cm}

\begin{minipage}{0.48\textwidth}
    \includegraphics[width=\linewidth]{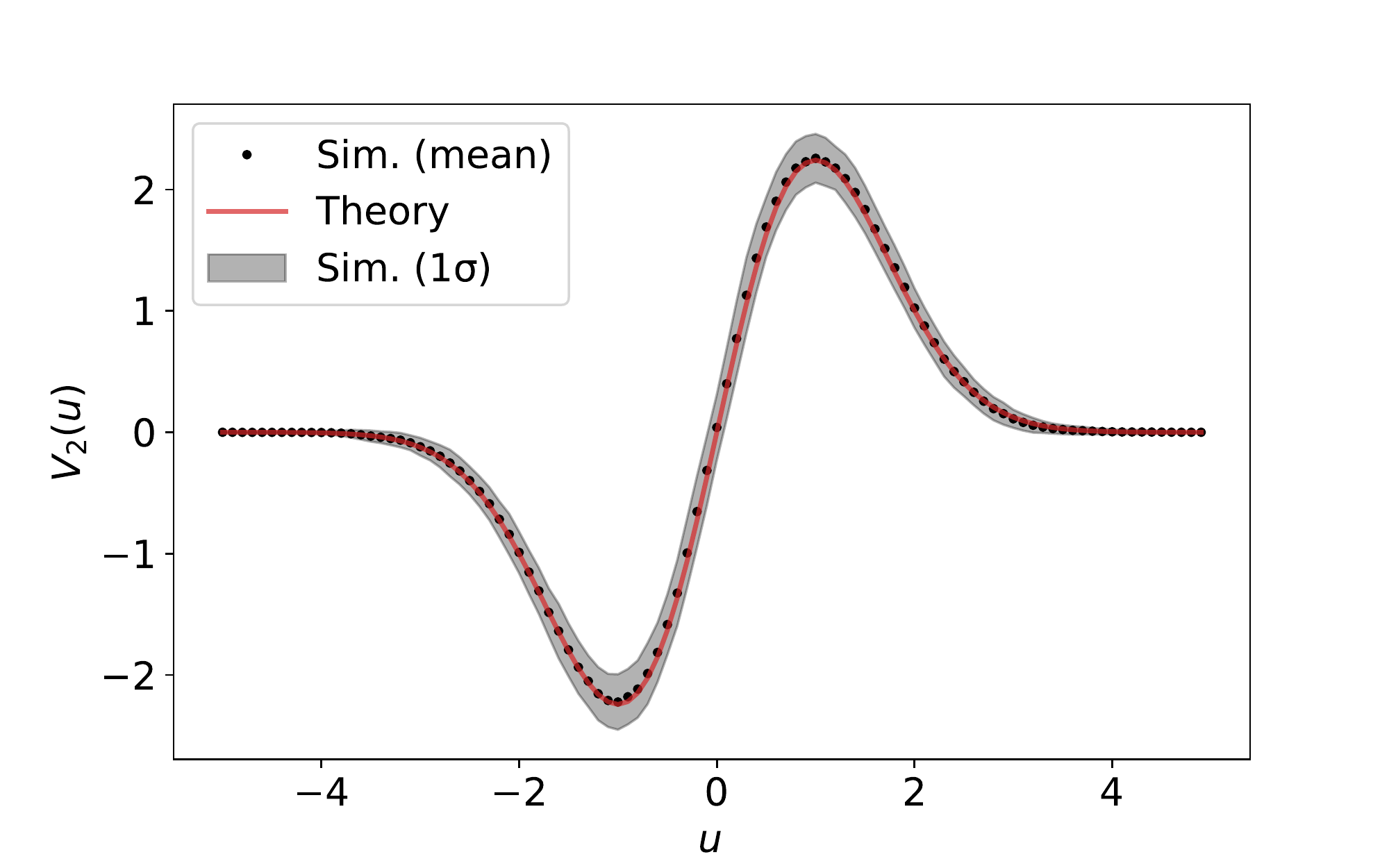}
\end{minipage}
\begin{minipage}{0.48\textwidth}
    \includegraphics[width=1\linewidth]{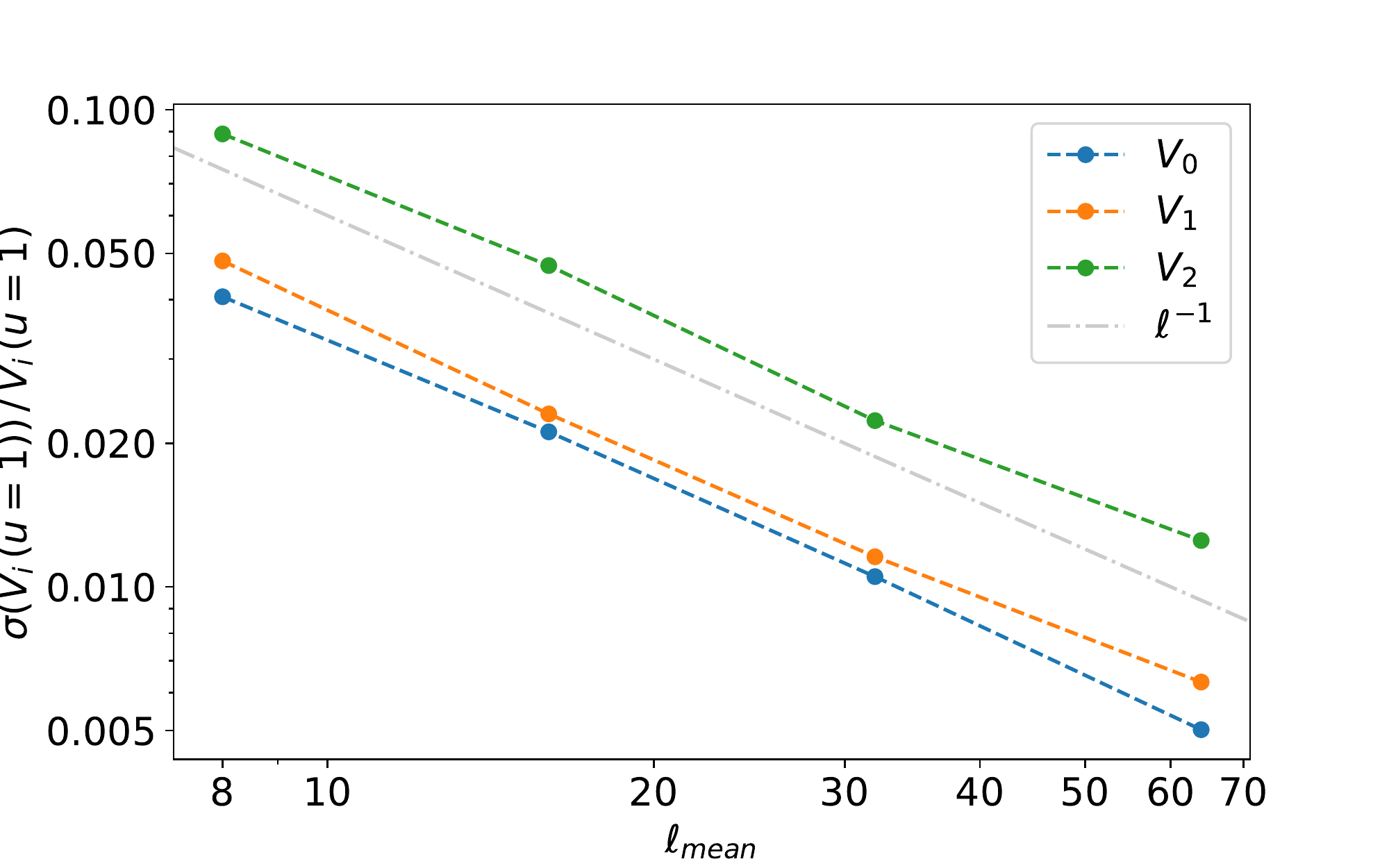}
\end{minipage}
\caption{The results of the three Minkowski Functionals on needlet components of spherical fields. In all cases, the theory is compared with the mean and standard deviation of $100$ Gaussian isotropic simulations. Top left, top right, and bottom left show the first, second, and third Minkowski Functionals as a function of threshold, for a low multipole needlet component ($B=2$, $j=3$, $\ell_{mean}=8$). Bottom right shows the trend of the relative standard deviation of these statistics with respect to the size of the needlet considered, shown at $u=1$ as an example; this is compared to the theoretically expected $\ell^{-1}$ trend.}
\label{f:mink}
\end{figure}

These results prove that, in the high-frequency limit, the Lipschitz-Killing
Curvatures converge to their expected values even on a single realization of
a Gaussian isotropic map, and the fluctuations follow Gaussian behaviour,
thus allowing for standard chi-square testing procedures for
goodness-of-fit. This is yet another form of high-frequency ergodicity, as discussed in the previous section for the empirical distribution of critical points.

The values of the three Minkowski Functionals can be seen in \textbf{Figure \ref{f:mink}}, including the theoretical expectation and both the mean and standard deviation on $100$ simulations. The three Minkoski Functionals are shown for a large scale needlet component ($B=2$, $j=3$, $\ell_{mean}=8$), as well as the evolution of the standard deviation with needlet scale; the trend closely resembles $\ell^{-1} \sim \frac{1}{B^j}$, as expected.

These ideas have been applied in a series of papers, including an analysis on needlet components of Planck maps \citep{planckIS}. They find that the assumption of Gaussianity and isotropy of the foreground-cleaned maps is consistent with the empirical evidence: the Minkowski Functionals evaluated on the Planck CMB data are compatible with the results on realistic simulations at the $2\sigma$ level across different needlet frequencies \citep[see also][for a study on Non--Gaussianity on Planck data]{planckNG}.

Finally, it is worth noting that Minkowski Functionals, as well as other higher-order statistics, are being explored in several spherical fields within Cosmology. This is in order to extract more information from the possible non-Gaussianities of the fields, as this information is unavailable when using only the angular power spectrum. Some applications include Galactic foregrounds \citep{KraPu,martire2023}, maps of the gravitational lensing of the CMB due to general relativity \citep{euclidxxix, grewal2022, zurcher2022}, and the distribution of matter in the Universe, using either the galaxy distribution \citep{appleby2022, liu2023} or observations of the emission of neutral H in the $21$cm line \citep{Spina2021}.

\section{DIRECTIONS FOR FURTHER RESEARCH}

The analysis of CMB temperature maps has now been very deeply explored in
the last 20 years and, hence, a wide variety of tools and techniques are
available for data analysis. The next couple of decades will present more
sophisticated and mathematically challenging issues. Among these, a special
mention must be devoted to CMB polarization data, which will be the object
of several ground-based and satellite experiments, such as the Simons Observatory, ACT, or LiteBIRD,
among others; see \cite{LiteBird} and the references therein.

It is not easy to do justice to the mathematical complexity of polarization random fields at an introductory
level. In a nutshell, the point is that the incoming photons making up CMB oscillate in the plane orthogonal to their travelling direction (as a standard consequence of Maxwell's equations of electromagnetism). Because of these oscillations, they can be understood as drawing unoriented lines on the tangent plane of every point in the celestial sphere. This line bundle can indeed be observed: see \textbf{Figure \ref{f:pol}} for a visualization of CMB polarization (modulus and direction), and \textbf{Figure \ref{f:tempol}} for a visualization of the polarization direction over the temperature map; both figures with data measured by the Planck Satellite.

\begin{figure}
\includegraphics[width=5in]{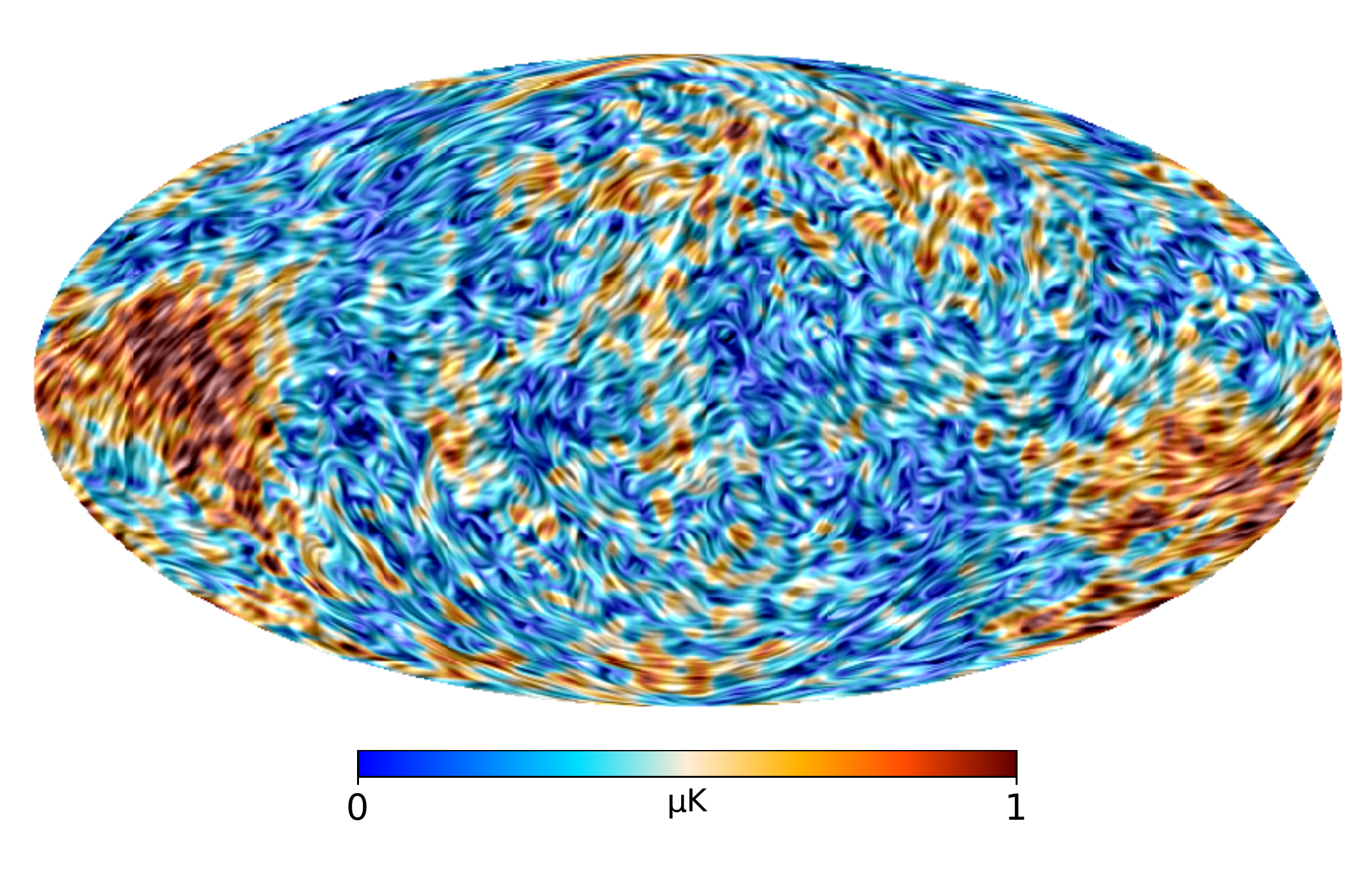}\vspace{-0.5cm}
\caption{CMB polarization, as measured by the Planck satellite. Given that this is a spin 2 field on the sphere, we represent the modulus and direction: the polarization modulus is represented by the background colors, whereas the direction of the polarization is represented through unoriented lines, which constitute the texture of the image. Both modulus and direction have been smoothed with a Gaussian kernel at $fwhm=4^\circ$ for visualization purposes.}
\label{f:pol}
\end{figure}

\begin{figure}
\includegraphics[width=5in]{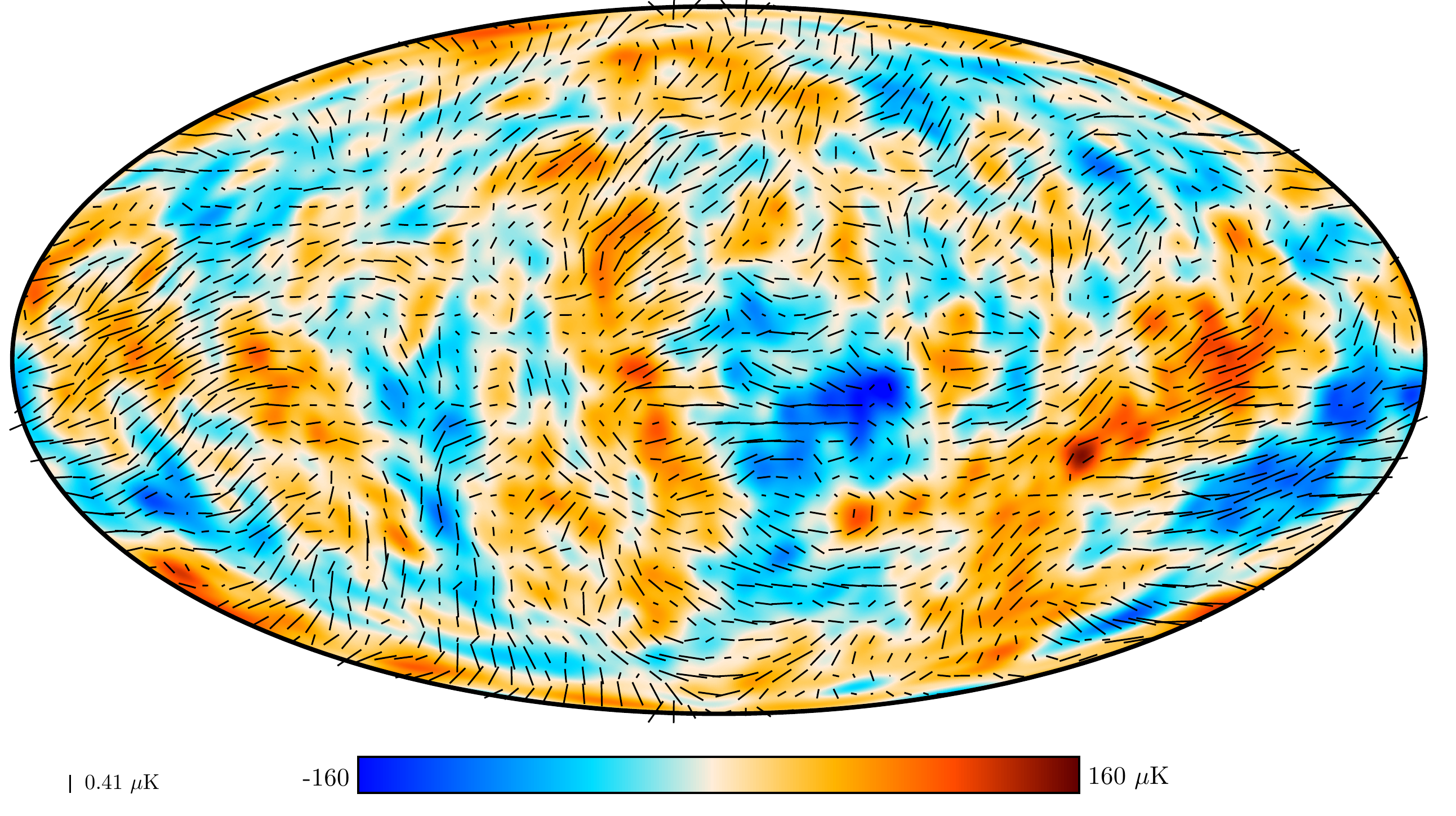}
\caption{CMB temperature and direction of polarization, as measured by the Planck satellite. The color represents the value of the CMB temperature, while the unoriented lines represent the direction of the polarization. Both the temperature and the polarization direction have been smoothed with a Gaussian kernel at $fwhm=5^\circ$ for visualization purposes. Figure reproduced from \cite{planck}; copyright 2020 by The European Southern Observatory.}
\label{f:tempol}
\end{figure}

Mathematically, this can be modeled by considering polarization data as a
realization of a random spin fiber bundle, as discussed for in instance in
\cite{gm2010,leo,maly,BR2013,stecconi2021,LMRS2022}. The idea of spin fiber
bundles was first introduced in a classical paper by \cite{np65}, where it is argued that a quantity $f_{s}(x)$ behaves as a
spin bundle of order $s$ if it transforms as follows under a rotation of $%
\gamma $ radians of the tangent plane at $x:$%
\begin{equation*}
f_{s}^{\prime }(x)=\exp (is\gamma )\,f_{s}(x)\text{ , }\:\gamma \in [0,2\pi )\text{ .}
\end{equation*}%
In the case of line bundles, $s=2$. Indeed, polarization at any point $%
x\in \mathbb{S}^{2}$ is invariant with respect to rotations of $\pi $
radians (\textit{i.e.}, 180 degrees). Spin random fields could also be seen as
tensor-valued, see again \cite{leo,maly} and the references therein.
We note that it is also possible to give a
spectral representation theorem \citep{gm2010,BR2013} for spin random fields, which takes
the form%
\begin{equation}
f_{s}(x)=\sum_{\ell ,m}a_{\ell m;2}Y_{\ell m;2}(x)\text{ , }
\label{srt-spin}
\end{equation}%
where we have introduced the spin spherical harmonics and the spin raising
operators
\begin{equation*}
Y_{\ell m;2}(x)=\partial _{1}\partial_0 Y_{\ell m}\text{ , }
\end{equation*}%
\begin{equation*}
\partial_0 :=(\frac{\partial }{\partial \theta }+\frac{i}{\sin \theta }\frac{%
\partial }{\partial \varphi })\text{ , }\quad\partial _{1}:=-\sin \theta (\frac{%
\partial }{\partial \theta }+\frac{i}{\sin \theta }\frac{\partial }{\partial
\varphi })\frac{1}{\sin \theta }\text{ .}
\end{equation*}

The expansion in Equation \ref{srt-spin} must again be taken with some care, as both the
left and right-hand sides are not invariant to change of local coordinates
in the tangent plane, even if the coordinates for $x\in \mathbb{S}^{2}$
remain the same.

The analysis of polarization data has an enormous importance for Cosmology. In
short, polarization can provide a compelling proof of the existence of primordial
gravitational waves, which would constitute an impressive verification of the
key prediction of Inflation, a theorized epoch of fast expansion in the very Early Universe (the first $10^{-32}$ seconds of the Universe). Other interesting aspects that can be studied with polarization involve the existence of a ``reionization bump'' in the angular power spectrum and the presence of weak gravitational lensing
effects. However, polarization data are much fainter than the CMB temperature,
and map-making and foreground removal are especially challenging, as proved, for instance, by the joint analysis of the Bicep/KECK and Planck team in \cite{BicepPlanck}.

To conclude this paper, we list a number of very challenging tasks,
mostly related to the analysis of polarization data discussed in this section. These are likely to require major statistical efforts in the next ten years.

\begin{enumerate}
\item The implementation of techniques for map-making in polarization. This task is
especially difficult because much less is known about the emission of
astrophysical foreground sources in polarization with respect to temperature
data; some attempts have been made to address these issues by
means of neural network techniques \citep{KraPu}, but the field is still
largely open for research. Very recently, attempts have been made to extend the ideas of NILC and its generalizations to the polarization framework, as in \cite{Car1,Car2}.

\item The development of goodness-of-fit tests for the assumptions of
Gaussianity and isotropy is definitely a very urgent issue, as shown
by the misclassification of Galactic dust emissions in some recent analysis of
polarization data. The implementation on polarization data of
Lipschitz-Killing Curvatures is made difficult by the fact that the
definition of excursion sets is much more subtle in the case of fiber
bundles: a possible approach is to focus on the squared norm of spin data,
which is a scalar quantity following a chi-square distribution with two
degrees of freedom. This is the approach followed by \cite{CarCar2022},
where the expected values of these functionals were also derived, exploiting
results by \cite{LMRS2022}. Much remains to be done to investigate the
statistical properties of such functionals in this framework: for instance,
nothing is currently known about their asymptotic variance nor asymptotic
distributions.

\item An alternative approach to polarization data can be pursued by lifting
the field on the group of rotations $SO(3)$ where it is a scalar-valued
(but anisotropic) field; see \cite{stecconi2021} for a mathematical discussion. In
this case, the machinery by \cite{RFG} to compute the expected values of
Lipschitz-Killing Curvatures becomes much more challenging, because the
fields are no longer isotropic. However, to leading order, the main
results are addressed in \cite{Carron2023}. Computation of variances and
limiting distributions are still completely open for research.

\item A different approach for the derivation of statistics based on
geometric and topological functionals (including Betti numbers) is pursued in \cite%
{LMRS2022}, where the excursion sets are given a much more general
characterization in terms of the behaviour of the field, its gradient and
Hessian, suitably defined for the spin case. Although this approach has made
possible the calculation of some expected values, nothing is currently known
on the variance and distribution of these statistics.

\item The sparsity of foreground components in the needlet domain could be exploited to improve its estimation and removal, as done by \cite{Oppizzi} in CMB temperature. The relevant needlet and wavelet construction is known in polarization \citep{gm2010} but the implementation of thresholding
and other sparsity enforcing techniques is still open for research and applications.

\item Very little is currently known about polarized point sources. The
extension to spin data of the STEM procedure that we discuss in Section \ref{s:ps} for
the scalar case therefore seems a very natural goal. However, a number of tools are
still to be derived, starting from the distribution of local maxima
for spin-valued random fields (in either of the approaches that we envisaged
before, \textit{i.e.}, either considering the corresponding norms as chi-square fields
or searching for maxima in the anisotropic field on $SO(3)$).

\item With a more novel approach, considering that polarization observations
are collected on many different frequency channels, it may be possible to
address the foreground estimation issue in the framework of functional data
analysis on the sphere. This area is completely
open for research, even for the scalar (temperature) case: an attempt to consider statistical analysis for data on the sphere, which take values in a Hilbert space, has been very recently given by \cite{Caponera2022}.
\end{enumerate}

This list of topics is by no means exhaustive, but we hope it will be enough
to motivate some readers to get interested in this challenging and fascinating area of research.



\section*{DISCLOSURE STATEMENT}
The authors are not aware of any affiliations, memberships, funding, or financial holdings that
might be perceived as affecting the objectivity of this review.

\section*{ACKNOWLEDGMENTS}
The research by JC has been supported by the InDark INFN project. The research by DM has been supported by the MUR Department of Excellence Programme MatMotTov.

%














\appendix
\section{SPHERICAL HARMONICS AND THEIR MAIN PROPERTIES}

In this Appendix, we discuss briefly the formalism of Fourier analysis on
the sphere, and we introduce the main properties of spherical harmonics. Let
us consider first the change of variables into spherical coordinates $%
(x,y,z)=(r\sin \theta \cos \varphi ,r\sin \theta \sin \varphi ,r\cos \theta
)$, where $r^{2}=x^{2}+y^{2}+z^{2}$, $\theta \in \lbrack 0,\pi ]$, $\varphi
\in \lbrack 0,2\pi )$. With this change of coordinates, the 3-dimensional
Laplacian operator $\Delta _{\mathbb{R}^{3}}=\frac{\partial ^{2}}{\partial
x^{2}}+\frac{\partial ^{2}}{\partial y^{2}}+\frac{\partial ^{2}}{\partial
z^{2}}$ takes the form%
\begin{equation*}
\frac{1}{r^{2}}\frac{\partial }{\partial r}r^{2}\frac{\partial }{\partial r}+%
\frac{1}{r^{2}}\Delta _{\mathbb{S}^{2}}\text{ , where }\Delta _{\mathbb{S}%
^{2}}=\frac{1}{\sin \theta }\frac{\partial }{\partial \theta }\sin \theta
\frac{\partial }{\partial \theta }+\frac{1}{\sin ^{2}\theta }\frac{\partial
^{2}}{\partial \varphi ^{2}}\text{ ;}
\end{equation*}%
the operator $\Delta _{\mathbb{S}^{2}}$ is called the spherical Laplacian.
The idea is to decompose the space of square integrable functions on the
sphere, $L^{2}(\mathbb{S}^{2})$, into an orthonormal system of polynomials,
restricted to live on the unit sphere. We first consider homogeneous
polynomials of order $\ell$, \textit{\textit{i.e.}}, linear combinations of terms
of the form $p_{\ell }(x,y,z)=x^{\alpha _{1}}y^{\alpha _{2}}z^{\alpha _{3}}$, with
$\alpha _{1}+\alpha _{2}+\alpha _{3} = \ell$. It is possible to show that the space
of homogeneous polynomials of degree $\ell $ can be decomposed as%
\begin{equation*}
Q_{\ell }(x,y,z)=H_{\ell }(x,y,z)+r^{2}H_{\ell -2}(x,y,z)+r^{4}H_{\ell
-4}(x,y,z)+...
\end{equation*}%
where $H_{\star}(x,y,z)$ denotes the space of harmonic polynomials, \textit{i.e.}, such
that $\Delta _{\mathbb{R}^{3}}h(x,y,z)\equiv 0$ for all $h\in H_{\star}$. Because
we are interested in the restrictions to the sphere, it is hence clearly enough
to focus on polynomials that are both homogeneous and harmonic. Now note
that any homogeneous polynomial can be written in spherical
coordinates as $p_{\ell }(r,\theta ,\varphi )=r^{\ell }Y_{\ell }(\theta
,\varphi )$, leading to%
\begin{eqnarray*}
&&\left\{ \frac{1}{r^{2}}\frac{\partial }{\partial r}r^{2}\frac{\partial }{%
\partial r}+\frac{1}{r^{2}}\Delta _{\mathbb{S}^{2}}\right\} p_{\ell
}(r,\theta ,\varphi ) \\
&=&r^{\ell -2}\ell (\ell +1)Y_{\ell }(\theta ,\varphi )+r^{\ell -2}\Delta _{%
\mathbb{S}^{2}}Y_{\ell }(\theta ,\varphi )=0\text{ .}
\end{eqnarray*}%
In other words, we must search for eigenfunctions of the spherical Laplacian
with eigenvalue $\lambda_\ell = -\ell (\ell +1).$

The following properties of spherical harmonics are important for a proper
understanding of the statistical issues and techniques reviewed in this
paper.

\begin{enumerate}
\item The space of eigenfunctions corresponding to the $\ell $--th
eigenvalue ($\lambda _{\ell }=-\ell (\ell +1)$) has dimensions $2\ell +1.$
Eigenfunctions belonging to different eigenspaces are orthogonal; indeed,
recalling that the Laplacian is a self-adjoint operator (\textit{i.e.}, $\int_{%
\mathbb{S}^{2}}(\Delta _{\mathbb{S}^{2}}f)(x)g(x)dx=\int_{\mathbb{S}%
^{2}}f(x)(\Delta _{\mathbb{S}^{2}}g)(x)dx,$ for all $f,g\in L^{2}(\mathbb{S}%
^{2}))$ we easily have that%
\begin{eqnarray*}
\left\langle Y_{\ell },Y_{\ell ^{\prime }}\right\rangle _{L^{2}(\mathbb{S}%
^{2})} &=&\int_{\mathbb{S}^{2}}Y_{\ell }(x)\overline{Y_{\ell ^{\prime
}}}(x)dx=-\frac{1}{\ell (\ell +1)}\int_{\mathbb{S}^{2}}(\Delta _{\mathbb{S}%
^{2}}Y_{\ell })(x)\overline{Y_{\ell ^{\prime }}}(x)dx \\
&=&-\frac{1}{\ell (\ell +1)}\int_{\mathbb{S}^{2}}Y_{\ell }(x)(\Delta _{%
\mathbb{S}^{2}}\overline{Y_{\ell ^{\prime }}})(x)dx \\
&=&\frac{\ell ^{\prime }(\ell ^{\prime }+1)}{\ell (\ell +1)}\int_{\mathbb{S}%
^{2}}Y_{\ell }(x)\overline{Y_{\ell ^{\prime }}}(x)dx=\frac{\ell ^{\prime
}(\ell ^{\prime }+1)}{\ell (\ell +1)}\left\langle Y_{\ell },Y_{\ell ^{\prime
}}\right\rangle _{L^{2}(\mathbb{S}^{2})}\text{ ,}
\end{eqnarray*}%
which implies the result.

\item A standard orthonormal basis basis for these spaces is formed by the
so-called (complex-valued) Fully Normalized Spherical Harmonics, defined by%
\begin{eqnarray*}
Y_{\ell m}(\theta ,\varphi ) &=&\sqrt{\frac{2\ell +1}{4\pi }}\sqrt{\frac{%
(\ell -m)!}{(\ell +m)!}}P_{\ell m}(\cos \theta )\exp (im\varphi )\text{ ,
for }m\geq 0\text{ ,} \\
Y_{\ell m}(\theta ,\varphi ) &=&(-1)^{m}\overline{Y_{\ell ,-m}}(\theta
,\varphi )\text{ for }m<0\text{ ,}
\end{eqnarray*}%
where $m=-\ell ,...,\ell $ and we have introduced the associated Legendre
functions%
\begin{equation*}
P_{\ell m}(t):=\frac{1}{2^{\ell }\ell !}(1-t^{2})^{m/2}\frac{d^{\ell +m}}{%
dt^{\ell +m}}(t^{2}-1)^{\ell }\text{ ,}
\end{equation*}%
see \cite{AtkHan} and \cite{marpecbook} for more discussion and details.

\item In the special case where $m=0,$ the associated Legendre polynomials
coincide with Legendre polynomials, defined by%
\begin{equation*}
P_{\ell }(t):=\frac{1}{2^{\ell }\ell !}\frac{d^{\ell }}{dt^{\ell }}%
(t^{2}-1)^{\ell }.
\end{equation*}%
Legendre polynomials form an orthogonal base for the space of square
integrable functions from $[-1,1]$ into $\mathbb{R}$.

\item By construction, we have that%
\begin{equation}
\int_{\mathbb{S}^{2}}Y_{\ell m}(x)\overline{Y_{\ell ^{\prime }m^{\prime
}}}(x)dx=\int_{\mathbb{S}^{2}}Y_{\ell m}(\theta ,\varphi )\overline{Y_{\ell
^{\prime }m^{\prime }}}(\theta ,\varphi )\sin \theta d\varphi d\theta =\delta
_{\ell }^{\ell ^{\prime }}\delta _{m}^{m^{\prime }}\text{ .}
\label{orthogonality}
\end{equation}%
The crucial properties of spherical harmonics are given by the so-called
Addition Formula and Duplication Formula. The first holds for any system
of orthogonal elements and states the following: given $x_{1},x_{2}\in
\mathbb{S}^{2}$, we have that%
\begin{equation}
\sum_{m=-\ell }^{\ell }Y_{\ell m}(x_{1})\overline{Y_{\ell m}}(x_{2})=\frac{%
2\ell +1}{4\pi }P_{\ell }(\left\langle x_{1},x_{2}\right\rangle )\text{ ,}
\label{addition}
\end{equation}%
see \cite{AtkHan}, Theorem 2.9, or \cite{marpecbook}, Section 3.4.2, for a
proof. The duplication formula, on the other hand, can be stated as follows:
for any $x_{1},x_{2}\in \mathbb{S}^{2},$ we have that%
\begin{equation}
\int_{\mathbb{S}^{2}}\frac{2\ell +1}{4\pi }P_{\ell }(\left\langle
x_{1},x\right\rangle )\frac{2\ell +1}{4\pi }P_{\ell }(\left\langle
x,x_{2}\right\rangle )dx=\frac{2\ell +1}{4\pi }P_{\ell }(\left\langle
x_{1},x_{2}\right\rangle )\text{ .}  \label{duplication}
\end{equation}%
The proof follows immediately by using eq \ref{addition} twice and eq \ref%
{orthogonality}.
\end{enumerate}

The overwhelming majority of statistical results does not require any
explicit manipulation of the analytic expressions for the $Y_{\ell m}(.);$
these expressions are numerically implemented on packages such as
HealPix \citep{healpix}. On the other hand, Equations \ref{orthogonality}, \ref%
{addition}, and \ref{duplication} are the basis for nearly every theoretical
argument in the analysis of spherical random fields.

\section{THE SPECTRAL REPRESENTATION THEOREM}

The Spectral Representation Theorem plays such a crucial role in the
analysis of isotropic spherical random fields that we feel it is useful to
give a proper statement and a sketch of its proof here. Let us first recall that a function $\Gamma (.,.):\mathbb{S}^{2}\times \mathbb{S}^{2}\rightarrow
\mathbb{R}$ is said to be non-negative if and only if for all $p\in \mathbb{N%
}$, $x_{1},...,x_{p}\in \mathbb{S}^{2}$, and $\alpha _{1},...,\alpha _{p}\in
\mathbb{R}$ we have that%
\begin{equation*}
\sum_{j,k=1}^{p}\alpha _{j}\alpha _{k}\Gamma (x_{j},x_{k})\geq 0\text{ .}
\end{equation*}%
Given that we are concerned with isotropic fields, we have that $\Gamma
(x_{j},x_{k})=\Gamma (x_{j}^{\prime },x_{k}^{\prime })$ whenever $%
\left\langle x_{j},x_{k}\right\rangle =\left\langle x_{j}^{\prime
},x_{k}^{\prime }\right\rangle$; in these cases we say that the function $\Gamma
(.,.)$ is isotropic and, with some abuse of notation, we write $\Gamma
(x_{j},x_{k})=\Gamma (\left\langle x_{j},x_{k}\right\rangle )$. The next
ingredient that we have to consider is Schoenberg's Theorem, which reads as follows:

\begin{theorem}[Schoenberg]
Let $\Gamma (\left\langle x_{1},x_{2}\right\rangle ):%
\mathbb{S}^{2}\times \mathbb{S}^{2}\rightarrow \mathbb{R}$ be an isotropic,
continuous, non-negative definite function. Then there exist a sequence of
non-negative weights $\left\{ C_{\ell }\right\} _{\ell =0,1,...}$ such that $%
\sum_{\ell =0}^{\infty }\frac{2\ell +1}{4\pi }C_{\ell }=\Gamma (\left\langle
x,x\right\rangle )<\infty $ and
\begin{equation*}
\Gamma (\left\langle x_{1},x_{2}\right\rangle )=\sum_{\ell =0}^{\infty }%
\frac{2\ell +1}{4\pi }C_{\ell }P_{\ell }(\left\langle
x_{1},x_{2}\right\rangle )\text{ ,}
\end{equation*}%
uniformly for all $x_{1},x_{2}\in \mathbb{S}^{2}$.
\end{theorem}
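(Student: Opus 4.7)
The approach is to view $\Gamma$ as the kernel of a rotation-invariant integral operator and extract the $C_\ell$ as its eigenvalues. First I would introduce
$$T_\Gamma f(x) := \int_{\mathbb{S}^2} \Gamma(\langle x, y\rangle)\, f(y)\, dy$$
on $L^2(\mathbb{S}^2)$. Continuity of $\Gamma$ on the compact product space makes $T_\Gamma$ a Hilbert--Schmidt, hence compact, self-adjoint operator; self-adjointness is automatic since $\Gamma$ is real and depends only on the symmetric inner product. A change of variables using rotation invariance of the uniform measure on $\mathbb{S}^2$ shows that $T_\Gamma$ commutes with every rotation operator $R_g f(x) := f(g^{-1} x)$, $g \in SO(3)$.

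Next I would upgrade the pointwise non-negative definiteness to operator non-negativity. By approximating $\int\!\int \Gamma(\langle x, y\rangle) f(x)\overline{f(y)}\,dx\,dy$ via Riemann sums indexed by a nearly equispaced grid on $\mathbb{S}^2$ (weighting the values $f(x_j)$ by square roots of cubature weights to form the real coefficients $\alpha_j$, and treating real and imaginary parts of $f$ separately), the defining inequality passes to the limit and yields $\langle T_\Gamma f, f\rangle_{L^2} \geq 0$ for every $f \in L^2(\mathbb{S}^2)$. Hence $T_\Gamma$ is a compact, non-negative, self-adjoint operator.

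The heart of the argument is the identification of its eigenspaces with the multipole spaces $\mathcal{H}_\ell := \mathrm{span}\{Y_{\ell m}\}_{m=-\ell}^{\ell}$. Since $T_\Gamma$ commutes with the $SO(3)$-action, each eigenspace is $SO(3)$-invariant; but the $\mathcal{H}_\ell$ are precisely the irreducible invariant subspaces of $L^2(\mathbb{S}^2)$ under the standard representation. By Schur's lemma $T_\Gamma$ acts as a scalar $C_\ell$ on each $\mathcal{H}_\ell$, and by the previous paragraph $C_\ell \geq 0$. Mercer's theorem for continuous, non-negative, symmetric kernels then delivers the absolutely and uniformly convergent expansion
$$\Gamma(\langle x, y\rangle) = \sum_{\ell=0}^{\infty} C_\ell \sum_{m=-\ell}^{\ell} Y_{\ell m}(x)\overline{Y_{\ell m}(y)} = \sum_{\ell=0}^{\infty} \frac{2\ell+1}{4\pi}\, C_\ell\, P_\ell(\langle x, y\rangle),$$
where the second equality is the addition formula (Equation \ref{addition}). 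Setting $x = y$ and using $P_\ell(1)=1$ yields the normalization $\sum_\ell \tfrac{2\ell+1}{4\pi} C_\ell = \Gamma(1) < \infty$.

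The main obstacle I expect is the careful passage from the pointwise definition of non-negative definiteness (stated with finitely many real scalars $\alpha_j$) to operator non-negativity $\langle T_\Gamma f, f\rangle \geq 0$ for all complex $f \in L^2$, which requires a density/cubature argument. A secondary subtlety is upgrading $L^2$-convergence of the spectral expansion to uniform convergence: once the eigenvalues are known to be non-negative, one may bypass Mercer's theorem and instead invoke the Weierstrass M-test via $|P_\ell(t)| \leq 1$ on $[-1,1]$, provided the summability $\sum \tfrac{2\ell+1}{4\pi} C_\ell < \infty$ has been established first (e.g.\ by applying the pre-Mercer trace bound at the diagonal).
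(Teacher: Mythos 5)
The paper never proves this theorem: Schoenberg's result is stated as a classical ingredient (the spherical analogue of the Herglotz/Bochner theorems) and is then used as a black box in the proof of the Spectral Representation Theorem, so there is no in-paper argument to compare against. Your proposal supplies the standard operator-theoretic proof, and it is essentially correct: the convolution operator $T_\Gamma$ is compact and self-adjoint, it commutes with the $SO(3)$ action, and since $L^2(\mathbb{S}^2)=\bigoplus_\ell \mathcal{H}_\ell$ is a multiplicity-free decomposition into pairwise non-isomorphic irreducibles, Schur's lemma forces $T_\Gamma$ to act as a scalar $C_\ell\ge 0$ on each $\mathcal{H}_\ell$; equivalently, the Funk--Hecke formula gives $T_\Gamma Y_{\ell m}=C_\ell Y_{\ell m}$ with $C_\ell=2\pi\int_{-1}^{1}\Gamma(t)P_\ell(t)\,dt$ directly, which would let you bypass the representation theory entirely. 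Two points of polish. First, the sentence ``each eigenspace is $SO(3)$-invariant; but the $\mathcal{H}_\ell$ are precisely the irreducible invariant subspaces'' has the logic slightly reversed: an eigenspace of $T_\Gamma$ may be a direct sum of several $\mathcal{H}_\ell$ when eigenvalues coincide, so the correct statement, which is what Schur's lemma plus multiplicity one actually delivers, is that each $\mathcal{H}_\ell$ sits inside a single eigenspace. Second, the step you flag as the main obstacle is easier than you suggest: for a real symmetric kernel the complex quadratic form $\sum_{j,k}\Gamma(\langle x_j,x_k\rangle)z_j\overline{z_k}$ decomposes into the real quadratic forms of the real and imaginary parts (the cross terms cancel by symmetry), and continuity of $\Gamma$ plus boundedness of $T_\Gamma$ then carries the Riemann-sum inequality first to continuous $f$ and then to all of $L^2$ by density. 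With $C_\ell\ge 0$ in hand, either Mercer's theorem or your alternative route --- the diagonal bound $\sum_{\ell\le L}\frac{2\ell+1}{4\pi}C_\ell\le\Gamma(\langle x,x\rangle)$ followed by the Weierstrass M-test with $|P_\ell(t)|\le 1$ --- correctly produces the uniform convergence and, evaluating at $x_1=x_2$ with $P_\ell(1)=1$, the stated normalization.
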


Therefore, the family of non-negative isotropic functions on $\mathbb{S}^{2}\times
\mathbb{S}^{2}$ can be identified with the covariance functions of isotropic random fields. Schoenberg's Theorem is hence a generalization of classical
results by Herglotz and Bochner: it states that the autocovariance
functions can be expressed as an inverse Fourier transform of the \emph{%
angular power spectrum} $\left\{ C_{\ell }\right\} _{\ell =0,1,...}$. The
result requires the autocovariance function to be uniformly continuous;
however, it was proved by \cite{mp2012} that this is necessarily the case if
the isotropic spherical random field is measurable.

Having in mind this result, the Spectral Representation Theorem can be stated as follows:
\begin{theorem}[Spectral Representation]
Let $T:\Omega \times \mathbb{S}^{2}\rightarrow \mathbb{R}$ be an isotropic
random field on the sphere with finite variance. Then there exists a family
of orthogonal random coefficients $\left\{ a_{\ell m}\right\} _{m=-\ell
,...,\ell;\, \ell =0,1,2...}$ such that $\mathbb{E}\left[ a_{\ell m}\overline{a%
_{\ell ^{\prime }m^{\prime }}}\right] =\delta _{\ell }^{\ell ^{\prime
}}\delta _{m}^{m^{\prime }}C_{\ell }$ and
\begin{equation*}
T(x,\omega )=\sum_{\ell =0}^{\infty }\sum_{m=-\ell }^{\ell }a_{\ell
m}(\omega )Y_{\ell m}(x)\text{ ,}
\end{equation*}%
where the equality holds both in the $L^{2}(\Omega )$ sense for fixed $x\in
\mathbb{S}^{2}$ and in $L^{2}(\Omega \times \mathbb{S}^{2}).$
\end{theorem}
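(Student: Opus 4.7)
The plan is to define the random coefficients directly via the inverse spherical harmonic transform and then verify the claimed covariance structure and convergence using Schoenberg's Theorem together with the orthonormality and Addition Formula for the $Y_{\ell m}$. Concretely, I would set
\begin{equation*}
a_{\ell m}(\omega) := \int_{\mathbb{S}^2} T(x,\omega)\,\overline{Y_{\ell m}(x)}\, dx .
\end{equation*}
A preliminary step is to check this integral makes sense $\mathbb{P}$-a.s.\ and defines a genuine random variable: assuming a measurable version of $T$ (which is automatic under the hypotheses by the cited Marinucci--Peccati 2012 result), finite second moment together with Cauchy--Schwarz and Fubini give $\mathbb{E}|a_{\ell m}|^2<\infty$ and justify the interchange of expectation and integration used below.

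Next I would compute the covariance of the $a_{\ell m}$. Writing
\begin{equation*}
\mathbb{E}\!\left[a_{\ell m}\overline{a_{\ell' m'}}\right] = \int_{\mathbb{S}^2}\!\int_{\mathbb{S}^2} \Gamma(\langle x,y\rangle)\,\overline{Y_{\ell m}(x)}\,Y_{\ell' m'}(y)\, dx\, dy ,
\end{equation*}
I would expand $\Gamma$ by Schoenberg's Theorem, replace each $\frac{2\ell''+1}{4\pi}P_{\ell''}(\langle x,y\rangle)$ by $\sum_{m''}Y_{\ell'' m''}(x)\overline{Y_{\ell'' m''}(y)}$ via the Addition Formula (\ref{addition}), and then collapse the two integrals using the orthonormality relation (\ref{orthogonality}). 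This produces the desired $\delta_\ell^{\ell'}\delta_m^{m'}C_\ell$. The one technical point to watch is justifying the term-by-term integration of the Schoenberg series, which follows from the uniform convergence guaranteed by Schoenberg together with $\sum_\ell (2\ell+1)C_\ell /(4\pi)=\Gamma(1)<\infty$.

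To establish the representation itself, I would introduce the partial sums $T_L(x,\omega):=\sum_{\ell\le L}\sum_{m=-\ell}^{\ell}a_{\ell m}(\omega)Y_{\ell m}(x)$ and expand
\begin{equation*}
\mathbb{E}\bigl[|T(x)-T_L(x)|^2\bigr] = \mathbb{E}|T(x)|^2 - 2\,\mathrm{Re}\,\mathbb{E}[T(x)\overline{T_L(x)}] + \mathbb{E}|T_L(x)|^2 .
\end{equation*}
Using the covariance calculation above, an analogous computation gives $\mathbb{E}[T(x)\overline{a_{\ell m}}]=C_\ell Y_{\ell m}(x)$, so all three terms reduce, via the Addition Formula evaluated at $\langle x,x\rangle=1$, to partial sums of $\sum_\ell \frac{2\ell+1}{4\pi}C_\ell$. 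The residual collapses to the tail $\sum_{\ell>L}\frac{2\ell+1}{4\pi}C_\ell$, which vanishes as $L\to\infty$ by Schoenberg. This yields convergence in $L^2(\Omega)$ pointwise in $x$; integrating over $\mathbb{S}^2$ and invoking Fubini immediately upgrades this to convergence in $L^2(\Omega\times\mathbb{S}^2)$.

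The main obstacle I expect is less the algebraic manipulation and more the analytic bookkeeping: making sure that the Schoenberg expansion of $\Gamma$ can be interchanged with the double integral defining the covariance of $a_{\ell m}$, and that the stochastic series $\sum_{\ell,m}a_{\ell m}Y_{\ell m}(x)$ is well-defined before one knows it converges. Both obstacles dissolve once one works with the measurable version of $T$ and exploits that $\sum_\ell(2\ell+1)C_\ell<\infty$, which provides absolute integrability and uniform control of all tails, but this is the step I would be most careful to write out rigorously.
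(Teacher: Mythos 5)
Your proof is correct, but it takes a genuinely different route from the paper's. You define $a_{\ell m}=\int_{\mathbb{S}^{2}}T(x,\omega)\overline{Y_{\ell m}(x)}\,dx$ exactly as the paper does and arrive at the same covariance structure, but where the paper establishes the expansion by constructing a linear isometry $D:\mathcal{H}\rightarrow L^{2}(\mathbb{S}^{2})$ on $\mathcal{H}=\overline{span}\left\{T(x)\right\}$ (sending $T(x)$ to $\sum_{\ell}\frac{2\ell+1}{4\pi}\sqrt{C_{\ell}}P_{\ell}(\left\langle \cdot,x\right\rangle)$, verified via the duplication formula) and then inverting $D$ to read off the series, you instead run a direct second-moment computation on the partial sums $T_{L}$: the cross moment $\mathbb{E}[T(x)\overline{a_{\ell m}}]=C_{\ell}Y_{\ell m}(x)$ together with the Addition Formula at $\left\langle x,x\right\rangle=1$ collapses $\mathbb{E}\bigl[|T(x)-T_{L}(x)|^{2}\bigr]$ to the tail $\sum_{\ell>L}\frac{2\ell+1}{4\pi}C_{\ell}$, which Schoenberg's Theorem sends to zero. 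Your route is more elementary and self-contained: it avoids introducing $\mathcal{H}$, $\mathcal{K}$ and the inversion of $D$, and since your residual bound is uniform in $x$, the upgrade to $L^{2}(\Omega\times\mathbb{S}^{2})$ is immediate. The paper's isometry argument is the more structural one, exhibiting the result as the spherical analogue of the Cram\'er--Kolmogorov spectral representation for stationary processes and making transparent why the $a_{\ell m}$ belong to the linear span of the field. The technical caveats you flag (measurability of $T$, hence continuity of $\Gamma$ by the cited Marinucci--Peccati result, and term-by-term integration of the uniformly convergent Schoenberg series) are exactly the right ones and are resolved as you indicate.
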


\begin{proof}
(Sketch) A sketch of the proof can be given as follows. Denote by $\mathcal{%
H}=\overline{span}\left\{ T(x),\text{ }x\in \mathbb{S}^{2}\right\} \subset
L^{2}(\Omega )$ the closure of the space generated by the linear
combinations of $T(x),$ and define the linear isometry%
\begin{equation*}
D:\mathcal{H\rightarrow }L^{2}(\mathbb{S}^{2})\text{ , }D(T(x)):=\sum_{\ell
=0}^{\infty }\frac{2\ell +1}{4\pi }\sqrt{C_{\ell }}P_{\ell }(\left\langle
.,x\right\rangle )\text{ .}
\end{equation*}%
Finally, let us call $\mathcal{K=}\overline{span}\left\{ \sum_{\ell
=0}^{\infty }\frac{2\ell +1}{4\pi }C_{\ell }P_{\ell }(\left\langle
.,x\right\rangle ),\text{ }x\in \mathbb{S}^{2}\right\} \subset L^{2}(\mathbb{%
S}^{2}).$ It is readily seen that this application is indeed an isometry, in
fact by Schoemberg's Theorem we have that%
\begin{eqnarray*}
\mathbb{E}\left[ T(x_{1})T(x_{2})\right] &=&\left\langle
T(x_{1}),T(x_{2})\right\rangle _{L^{2}(\Omega )}=\sum_{\ell =0}^{\infty }%
\frac{2\ell +1}{4\pi }C_{\ell }P_{\ell }(\left\langle
x_{1},x_{2}\right\rangle ) \\
&=&\int_{\mathbb{S}^{2}}\sum_{\ell =0}^{\infty }\frac{2\ell +1}{4\pi }\sqrt{%
C_{\ell }}P_{\ell }(\left\langle x_{1},x\right\rangle )\sum_{\ell ^{\prime
}=0}^{\infty }\frac{2\ell ^{\prime }+1}{4\pi }\sqrt{C_{\ell ^{\prime }}}%
P_{\ell ^{\prime }}(\left\langle x,x_{2}\right\rangle )dx \\
&=&\left\langle D(T(x_{1})),D(T(x_{2}))\right\rangle _{L^{2}(\mathbb{S}^{2})}%
\text{ ,}
\end{eqnarray*}%
using the duplication formula and the definition of the inner product in $L^{2}(\mathbb{S}^{2})$. It is also
readily seen, for instance by approximating the integral with Riemann sums,
that we have%
\begin{eqnarray*}
a_{\ell m}(\omega ) &\coloneqq&\int_{\mathbb{S}^{2}}T(x,\omega )\overline{Y_{\ell
m}}(x)dx\:\in \mathcal{H}\text{ ,} \\
D(a_{\ell m}(\omega )) &=&\int_{\mathbb{S}^{2}}D(T(x,\omega ))\overline{Y%
_{\ell m}}(x)dx =\int_{\mathbb{S}^{2}}\sum_{\ell =0}^{\infty }\frac{2\ell +1}{4\pi }%
C_{\ell }P_{\ell }(\left\langle .,x\right\rangle )\overline{Y_{\ell m}}(x)dx=%
\sqrt{C_{\ell }}\,\overline{Y_{\ell m}}(.)\text{ ,}
\end{eqnarray*}%
using again \ref{addition} and \ref{orthogonality}; note that%
\begin{equation*}
\mathbb{E}\left[ a_{\ell m}\overline{a_{\ell ^{\prime }m^{\prime }}}\right]
=\left\langle a_{\ell m},a_{\ell ^{\prime }m^{\prime }}\right\rangle
_{L^{2}(\Omega )}=\left\langle \sqrt{C_{\ell }}\,\overline{Y_{\ell m}}(.),%
\sqrt{C_{\ell ^{\prime }}}\,\overline{Y_{\ell ^{\prime }m^{\prime
}}}(.)\right\rangle _{L^{2}(\Omega )}=\delta _{\ell }^{\ell ^{\prime }}\delta
_{m}^{m^{\prime }}C_{\ell }\text{ .}
\end{equation*}%
Moreover, the application $D$ is an isometry and thus injective and
invertible, so that%
\begin{eqnarray*}
T(x,\omega ) &=&D^{-1}\left(\sum_{\ell =0}^{\infty }\frac{2\ell +1}{4\pi }\sqrt{%
C_{\ell }}P_{\ell }(\left\langle .,x\right\rangle )\right) \\
&=&\sum_{\ell =0}^{\infty }\frac{2\ell +1}{4\pi }D^{-1}\left(\sqrt{C_{\ell }}\,%
\overline{Y_{\ell m}}(.)\right)Y_{\ell m}(x) \\
&=&\sum_{\ell =0}^{\infty }\sum_{m=-\ell }^{\ell }a_{\ell m}(\omega )Y_{\ell
m}(x)\text{ ,}
\end{eqnarray*}%
where we have used, once again, linearity and Equation \ref{addition}; the proof is
then completed.
\end{proof}

As a final remark, it follows from the proof that the spherical harmonic coefficients are
always uncorrelated, and hence independent, in the Gaussian case.
Surprisingly, the reverse turns out to be the case as well: for isotropic
random fields, if the spherical harmonic coefficients are independent, then
they are necessarily Gaussian, see \cite{BM-SPL}. Some ongoing research is devoted to provide further characterizations of the spherical harmonics coefficients.

\end{document}